\tikzstyle{printersafe}=[snake=snake,segment amplitude=0 pt]
\newtheorem{claim}{Claim}
\newtheorem{theorem}{Theorem}
\newtheorem{proposition}{Proposition}
\newtheorem{obs}{Observation}
\newtheorem{definition}{Definition}
\newtheorem{lemma}{Lemma}
\newtheorem{corollary}{Corollary}
\newenvironment{proof}{ {\bf Proof:}} 
\newcommand{\revision}[1]{{#1}}                   
\newcommand{\STAB}{{\hbox{STAB}}}
\newcommand{\R}{{\mathbb{R}}}
\newcommand{\proj}{{\hbox{proj}}}
\begin{document}
\onehalfspacing

\begin{frontmatter}

\title{
{The Total Matching Polytope of Complete Bipartite Graphs}} 

\author[1]{Yuri Faenza}
\ead{yf2414@columbia.edu}

\author[2]{Luca Ferrarini}
\ead{l.ferrarini3@campus.unimib.it}

\address[1]{IEOR Department, Columbia University}
\address[2]{Università di Pavia, Dipartimento di Matematica ``F. Casorati''}


\begin{abstract}
The {\it total matching polytope} generalizes the stable set polytope and the matching polytope.
In this paper, we first propose new facet-defining inequalities for the total matching polytope. We then give an exponential-sized, non-redundant description in the original space and a compact description in an extended space of the total matching polytope of complete bipartite graphs. 

%

%
%

\end{abstract}

\begin{keyword}
Integer Programming \sep Combinatorial Optimization  \sep Total Matching \sep Polyhedral Combinatorics \sep Complete Bipartite Graphs \sep Extended Formulations
\end{keyword}

\end{frontmatter}

\section{Introduction}
Let $G=(V,E)$ be a simple, loopless and undirected graph, and let $D = V \cup E$ be the set of its \emph{elements}, i.e., vertices and edges.
Elements $d,d' \in D$ are said to be {\it adjacent} if $d$ and $d'$ are adjacent vertices, or incident edges, or $d$ (resp., $d'$) is an edge incident to a vertex $d'$ (resp, $d$).
If $d,d' \in D$ are not adjacent, they are {\it independent}.
A \textit{stable set} is a set of pairwise independent vertices, while a \textit{matching} is a set of pairwise independent edges.
A \textit{total matching} is a set of pairwise independent elements. Hence stable sets and matching are total matchings, but the converse may not be true. For instance, the complete graph with $3$ nodes has a total matching of size $2$, while all its matchings and stable sets have size at most $1$.
%
%
The \emph{total matching problem} asks for a total matching of maximum size.
%
%
%

Define $\nu_T(G) := \max \{ |T| : T \mbox{ is a total matching}\}$, $\nu(G):= \max \{ |M| : M \mbox{ is a matching}\}$ and $\alpha(G):= \max \{|S| : S \mbox{ is a stable set}\}$.
%
$\nu(G)$ and $\alpha(G)$ have been extensively studied in the literature, both algorithmically and in relation with the corresponding polytopes, see, e.g.,~\cite{aprile2020extended,Aprile201775,Chvatal,Edmonds1965,quasi-line,faenza2014solving,faenza2021separation,Yuri,Ventura,Letchford,Oriolo,Padberg1973,Rossi2001}. Despite the fact that it generalizes those illustrious special cases, and its connection with Vizing's long-standing \emph{total coloring conjecture} (see, e.g.,~\cite{Polyhedra} for details), the total matching problem is less studied in the operations research literature. 
In particular, significant results have been obtained only for very structured graphs, such as cycles, paths, full binary trees, hypercubes, and complete graphs~\cite{Leidner2012}.  The first works on the total matching problem appeared in \cite{TotalMatching,NordHaus}. 
%
In \cite{Manlove}, Manlove 
 reports that $\nu_{T}(G)$ can be computed in polynomial time for trees and it is NP-complete already for bipartite and planar graphs.
%
%
%
%
%
The authors in \cite{Polyhedra} propose the first polyhedral study of the total matching problem, deriving facet-defining inequalities for the total matching polytope, which is defined as follows. 
Given a total matching $T$, the corresponding characteristic vector $\chi[T] \in \{0,1\}^D$ is defined as 
\begin{equation*}
  \chi[T]_a:=\left\{
  \begin{array}{@{}ll@{}}
    1 & \text{if}\ a \in T  \\
    0 & \text{if}\ a \in D\setminus T.
  \end{array}\right.
\end{equation*} 
In the following, we often write a characteristic vector of a total matching as $z = (x,y) \in \{0,1\}^{|V|} \times \{0,1\}^{|E|}$, with $x$ corresponding to the vertex variables and $y$ to the edge variables.
The \emph{total matching polytope} of a graph $G=(V,E)$ is defined as:
\[
P_{T}(G) := \mbox{conv}\{\chi[T]: T\mbox{ is a total matching of $G$} \}.
\]
%
%

%

\smallskip 

\noindent {\bf Contributions and organization of the paper.}
The goal of this paper is to prove new results on the facial structure of $P_T(G)$, and highlight connections between $P_T(G)$ and the classical theory of polyhedral combinatorics. In Section~\ref{sec:preliminaries}, we introduce basic tools and employ them to show that the natural linear relaxation of $P_T(G)$ gives a complete description when $G$ is a tree. In Section~\ref{sec:facets}, we propose two new classes of inequalities for $P_T(G)$, dubbed \emph{balanced} and \emph{non-balanced lifted biclique} inequalities, and show that the former are always facet-defining, while the latter are facet-defining if $G$ is a bipartite graph. We also address complexity issues of the associated separation problem. In Section~\ref{sec:ef}, we give an extended formulation for $P_T(G)$ when $G$ is a complete bipartite graph. \revision{Extended formulations have been extensively used in polyhedral combinatorics to provide compact descriptions for polytopes that have exponentially many facets, see~\cite{SurveyExtended,kaibel2011extended} for surveys on classical results and, e.g.,~\cite{aprile2022extended,Aprile2021,del2023polynomial} for more recent developments.} Our extended formulation is based on Balas' technique to describe the convex hull of the union of polytopes~\cite{balas1998disjunctive} and on results on perfect graphs~\cite{chudnovsky2006strong,Chvatal}, \revision{and has a polynomial number of inequalities}. Using tools from~\cite{SurveyExtended} and through a careful analysis of the projection rays, we then project the extended formulation and show that $P_T(G)$ is described by the (exponential-sized, non-redundant) family of inequalities containing balanced and non-balanced lifted bicliques, and the inequalities from the natural linear relaxation of $P_T(G)$.

\smallskip 

\noindent {\bf Notation.} Given a graph $G=(V,E)$, let $n = |V|$ and $m = |E|$. 
For $v \in V$, we denote by $\delta(v)$ the set of edges incident to $v$. 
%
 For a subset of vertices $U \subseteq V$, let $G[U]$ be the subgraph of $G$ induced by $U$.
%
%
%
%
%
%
Given a graph $G$,  we let $V(G)$ (resp., $E(G)$) be its set of vertices (resp., edges). A \emph{biclique} $K_{r,s}$ is a complete bipartite graph, where the bipartition of $V(K_{r,s})$ is given by $(A,B)$ with $A=\{v_1,\dots,v_r\}$ and $B=\{w_1,\dots,w_s\}$. It is \emph{balanced} if $r=s$, \emph{non-balanced} otherwise.

\section{Preliminaries}\label{sec:preliminaries}

We start with a natural linear relaxation of $P_T(G)$.
\begin{proposition}
Let $G(V,E)$ be a graph. The following is a linear relaxation for $P_T(G)$:
\begin{align}
\label{m6:c1} & x_v + \sum_{e \in \delta(v)} y_{e} \leq 1 & \forall v \in V \\
\label{m6:c2} & x_{v} + x_{w} + y_{e} \leq 1  & \forall e=\{v,w\} \in E \\
\label{m6:c3} & x_{v},y_{e} \geq 0 & \forall v \in V, \forall e \in E .
\end{align}
\end{proposition}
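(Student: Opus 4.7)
The plan is to verify that every characteristic vector $\chi[T]$ of a total matching $T$ satisfies all three families of inequalities; since $P_T(G)$ is the convex hull of such vectors, this is sufficient. Nonnegativity \eqref{m6:c3} is immediate from the definition of the characteristic vector, so the real content lies in \eqref{m6:c1} and \eqref{m6:c2}, each of which I would justify in a short paragraph by invoking the pairwise-independence property that defines a total matching.

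For \eqref{m6:c1}, I would fix $v \in V$ and argue by cases on whether $v \in T$. If $v \in T$, then every edge in $\delta(v)$ is adjacent to $v$ (as an incident edge), hence cannot belong to $T$, giving $x_v = 1$ and $\sum_{e \in \delta(v)} y_e = 0$. If $v \notin T$, then $x_v = 0$, while the edges of $\delta(v)$ are pairwise incident at $v$ and so pairwise adjacent; at most one of them can therefore lie in $T$, and $\sum_{e \in \delta(v)} y_e \le 1$. In either case the inequality holds with equality at most achieved.

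For \eqref{m6:c2}, I would take any edge $e = \{v,w\} \in E$ and note that the three elements $v$, $w$, $e$ are pairwise adjacent in the sense of the paper: $v$ and $w$ are adjacent vertices, and $e$ is incident to each of them. Consequently, at most one of $v$, $w$, $e$ can appear in the total matching $T$, which is exactly $x_v + x_w + y_e \le 1$.

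There is no real obstacle here: the statement is a routine verification, and the main (trivial) task is simply to translate the independence condition defining a total matching into the two numerical bounds. The proof could be written in a handful of lines and does not require any additional machinery beyond the definitions introduced in the excerpt.
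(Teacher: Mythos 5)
Your proof is correct: it is the standard verification that every characteristic vector of a total matching satisfies the three families of inequalities (using pairwise adjacency of the elements involved in each constraint), which suffices since $P_T(G)$ is the convex hull of such vectors. The paper itself states this proposition without proof, treating it as routine, and your argument is exactly the verification it implicitly relies on.
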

\eqref{m6:c1}-\eqref{m6:c3} are called, \emph{node}, \emph{edge}, and \emph{nonnegativity} inequalities, respectively.
In \cite{Polyhedra}, the authors prove that they are facet-defining for $P_T(G)$, for any graph $G$.
%
%
The following definition introduces a useful tool to study total matching problems.
\begin{definition}
Given a graph $G$, the \emph{total graph} $T(G)$ of $G$ is a graph whose
vertices are the elements of $G$, and where two vertices of $T(G)$ are adjacent 
if and only if their corresponding elements are adjacent in $G$.
\end{definition}
%

The \emph{stable set polytope} $\STAB(G)$ of a graph $G$ is the convex hull of the characteristic vectors of stable sets of $G$. The following fact has been observed in~\cite{Polyhedra}.

\begin{proposition}\label{Total}
Let $G$ be a graph and $T(G)$ its total graph. Then, $P_{T}(G) =\STAB(T(G))$.
\end{proposition}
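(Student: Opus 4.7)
The plan is to prove the equality of polytopes by showing that the underlying sets of integer points coincide, which reduces to a set-theoretic bijection between total matchings of $G$ and stable sets of $T(G)$. Since both polytopes are defined as convex hulls of $\{0,1\}$-vectors indexed by the same ground set, it suffices to verify that these generating vectors are literally the same collection.

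First, I would observe that by definition of $T(G)$, its vertex set is exactly $D = V \cup E$, and hence $\{0,1\}^{V(T(G))} = \{0,1\}^D$ — so the characteristic vectors of total matchings of $G$ and of stable sets of $T(G)$ live in the same ambient space. Next, I would establish the set equality
\[
\{T \subseteq D : T \text{ is a total matching of } G\} = \{S \subseteq V(T(G)) : S \text{ is a stable set of } T(G)\}
\]
by a direct two-way argument. Fix $T \subseteq D$. By the definition given in the introduction, $T$ is a total matching of $G$ iff every pair $d, d' \in T$ is independent in $G$, i.e., not adjacent in $G$. By the definition of the total graph, $d$ and $d'$ are adjacent in $G$ iff the corresponding vertices of $T(G)$ are adjacent in $T(G)$. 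Therefore pairwise independence in $G$ is equivalent to pairwise non-adjacency in $T(G)$, which is exactly the stable-set condition for $T$ viewed as a subset of $V(T(G))$.

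Having the two families of subsets equal, their characteristic vectors coincide as subsets of $\{0,1\}^D$, so
\[
P_T(G) = \conv\{\chi[T] : T \text{ total matching of } G\} = \conv\{\chi[S] : S \text{ stable set of } T(G)\} = \STAB(T(G)),
\]
which completes the argument. There is no genuine obstacle here: the proof is a pure unfolding of the definitions of total matching, total graph, and stable set. The only care required is to make the identification of ambient spaces explicit, so that it is clear that the same $\{0,1\}$-vector serves simultaneously as the incidence vector of a total matching of $G$ and as the incidence vector of the corresponding stable set of $T(G)$.
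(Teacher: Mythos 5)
Your proof is correct: the paper itself gives no proof of this proposition (it merely cites it as an observation from an earlier reference), and your argument — identifying the ambient spaces via $V(T(G)) = D$ and unfolding the definitions to show that total matchings of $G$ and stable sets of $T(G)$ are literally the same subsets of $D$ — is exactly the standard definitional argument that justifies the claim.
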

%

Proposition~\ref{Total} already allows us to characterize $P_T(G)$ when $G$ is a tree. 

\begin{theorem}\label{thm:tree}
Let $G$ be a tree. Then a complete and non-redundant description of $P_{T}(G)$ is given by
\eqref{m6:c1} -- \eqref{m6:c3}. 
\end{theorem}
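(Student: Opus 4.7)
The plan is to invoke Proposition~\ref{Total} to reduce the claim to a description of $\STAB(T(G))$, and then exploit the structure of $T(G)$ when $G$ is a tree. A natural first step is to observe that $T(G)$ is isomorphic to the square $S(G)^2$ of the subdivision graph $S(G)$ of $G$ (the graph obtained by inserting one new vertex in the middle of every edge of $G$): two elements of $G$ are adjacent in $T(G)$ if and only if the corresponding vertices are at distance at most $2$ in $S(G)$. Since $G$ is a tree, so is $S(G)$, and I would then appeal to the classical fact that the square of a tree is chordal, hence perfect. By the Chv\'atal--Lov\'asz theorem on stable set polytopes of perfect graphs, $\STAB(T(G))$ is therefore described by the nonnegativity constraints together with the maximal clique inequalities of $T(G)$.

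The second step is to classify the maximal cliques of $T(G)$. Because closed balls of radius $1$ in a tree enjoy the Helly property, every clique of $S(G)^2$ is contained in some closed neighborhood $N_{S(G)}[v]$ with $v\in V(S(G))$. I would split into two cases: if $v$ corresponds to a vertex $u$ of $G$, then $N_{S(G)}[u] = \{u\}\cup\delta(u)$, and the associated clique inequality is exactly the node inequality~\eqref{m6:c1} at $u$; if $v$ corresponds to an edge $e=\{u,w\}$ of $G$, then $N_{S(G)}[e] = \{u,w,e\}$, and the corresponding triangle inequality is the edge inequality~\eqref{m6:c2} at $e$. Together with~\eqref{m6:c3}, this furnishes a complete description of $P_T(G)$. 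Non-redundancy is then inherited from the general fact, cited from~\cite{Polyhedra}, that~\eqref{m6:c1}--\eqref{m6:c3} are facet-defining for $P_T(G)$.

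The main technical obstacle is the chordality of $T(G) = S(G)^2$. For a self-contained argument, I would proceed by induction on $|V(G)|$: pick a leaf $u$ of $G$ with incident edge $f=\{u,w\}$, and verify that $\{u,f\}$ forms a simplicial pair in $T(G)$, since in $T(G)$ every other neighbor of $u$ or of $f$ is already adjacent to $w$. Peeling off this pair reduces $T(G)$ to (essentially) $T(G-u)$, and concatenating the resulting simplicial orderings yields a perfect elimination ordering of $T(G)$, proving chordality.
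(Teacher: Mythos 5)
Your proposal is correct and follows essentially the same route as the paper: reduce to $\STAB(T(G))$ via Proposition~\ref{Total}, establish that $T(G)$ is chordal (hence perfect) so that nonnegativity and maximal clique inequalities suffice, identify the maximal cliques of $T(G)$ with the node and edge inequalities, and inherit non-redundancy from the facet results of~\cite{Polyhedra}. The only difference is that the paper cites~\cite{Yannakakis} for the chordality of the total graph of a tree, whereas you supply a self-contained argument via the identification $T(G)\cong S(G)^2$ together with a leaf-peeling perfect elimination ordering; both that argument and your Helly-property classification of the maximal cliques are sound.
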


\begin{proof}
Consider the total graph $T(G)$ of a tree $G$. 
In~\cite[Theorem 5]{Yannakakis}, it is shown that a connected graph is a tree if and only if its total graph is \emph{chordal}, i.e., for every cycle and every pair of non-consecutive vertices $u,v$ of the cycle, we have $\{u,v\} \in E(G)$. 
Hence, we have:
$$ P_T(G)=\STAB(T(G))=\bigg\{x \in \mathbb{R}^{|V(T(G))|}_{\geq 0} : \sum\limits_{v \in K}x_v \leq 1, \forall \hbox{ clique $K$ of } T(G) \bigg\},$$
where the first equality follows by Proposition~\ref{Total} and the second since chordal graphs are perfect (see, e.g.,~\cite{Schrijver2003}).
Let $K$ be a maximal clique in $T(G)$. 
Since $G$ has no cycles, the preimage of $K$ in $G$ is either a node and its neighborhood, or an edge and its endpoints.
Hence, a maximal clique inequality corresponds to an inequality of the type \eqref{m6:c1} - \eqref{m6:c2}. We deduce that~\eqref{m6:c1} - \eqref{m6:c3} give a complete description of $P_T(G)$. To observe that the description is non-redundant, recall that~\cite{Polyhedra} showed that all inequalites~\eqref{m6:c1} -- \eqref{m6:c3} define facets.\qed
\end{proof}

%

%

Theorem~\ref{thm:tree} gives an alternative, polyhedral proof of the fact that a maximum total matching in a tree can be found in polynomial time~\cite{Manlove}, and in fact shows that even a total matching of maximum weight (with weights defined over the elements of the tree) can be found in polynomial time. 

\section{New families of facet-defining inequalities}\label{sec:facets}

\subsection{Balanced biclique inequalities} 

The facet-defining inequality \eqref{m6:c2} can be seen as induced by a balanced biclique $K_{1,1}$.
We next derive a generalization of these inequalities, and  show that they are facet-defining for the total matching polytope of any graph. 
%
%
%
\begin{lemma}\label{thm:balanced}
Let $G$ be a graph and $K_{r,r}$ be an induced balanced biclique of $G$ with $r \geq 2$. Then, the balanced biclique inequality:
\begin{equation}\label{complete}
 \sum\limits_{v \in V(K_{r,r})}{x_{v}}+\sum\limits_{e \in E(K_{r,r})}{y_{e}} \leq r
\end{equation}
is facet-defining for $P_{T}(G)$.
\end{lemma}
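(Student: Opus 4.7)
The plan is to prove that \eqref{complete} is facet-defining using the standard equation-reduction technique: take any linear equation $\sum_v \alpha_v x_v + \sum_e \beta_e y_e = \gamma$ satisfied by every total matching lying on the face $F$ cut out by \eqref{complete} at equality, and show that this equation must be a scalar multiple of \eqref{complete}. Full-dimensionality of $P_T(G)$, which is needed for this to conclude facet-definingness, is immediate from the fact that the empty set and each singleton $\{d\}$ for $d \in V \cup E$ are total matchings, giving $n+m+1$ affinely independent points. Validity of \eqref{complete} itself is a short case split: the vertex-part of any total matching of $K_{r,r}$ is a stable set, hence lies entirely in $A$ or entirely in $B$, and if it has size $k$ then the matching-part uses only one side of size $r-k$, contributing at most $r-k$ edges, for a total of at most $r$.

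The first step is to kill the coefficients of elements outside $K_{r,r}$. Fix a perfect matching $M$ of $K_{r,r}$; its characteristic vector lies in $F$. For any vertex $u \notin V(K_{r,r})$, the set $M \cup \{u\}$ is again a total matching in $F$, so subtracting the two equations gives $\alpha_u = 0$. For an edge $f = \{p,q\} \notin E(K_{r,r})$, I first observe that $f$ cannot have both endpoints in $V(K_{r,r})$: since $K_{r,r}$ is an induced complete bipartite subgraph, two such endpoints would sit on the same side of the bipartition and could therefore not form an edge of $G$. Hence either both endpoints are outside $V(K_{r,r})$, in which case I compare $M$ with $M \cup \{f\}$; or exactly one endpoint, say $p \in A$, lies in $V(K_{r,r})$, in which case I compare $B$ with $B \cup \{f\}$. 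Either pair lies in $F$, and I conclude $\beta_f = 0$.

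It remains to equate the surviving coefficients inside $K_{r,r}$. Applying the reduced equation to the total matching $A$ and to $(A \setminus \{a_i\}) \cup \{e_{ij}\}$ (both of size $r$, hence both in $F$) and subtracting gives $\beta_{e_{ij}} = \alpha_{a_i}$; the symmetric computation with $B$ gives $\beta_{e_{ij}} = \alpha_{b_j}$. As $i,j$ vary, these identifications collapse all $\alpha_v$ for $v \in V(K_{r,r})$ and all $\beta_e$ for $e \in E(K_{r,r})$ to a single common value $\alpha$, and plugging $A$ back in forces $\gamma = r\alpha$. The equation is thus a scalar multiple of \eqref{complete}, which proves that \eqref{complete} defines a facet.

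The main obstacle I anticipate is the middle step: exhibiting, for every possible placement of the endpoints of $f \notin E(K_{r,r})$ relative to $V(K_{r,r})$, a concrete base total matching in $F$ that remains tight for \eqref{complete} after adjoining $f$ without creating any adjacency. The fact that $K_{r,r}$ is an \emph{induced} subgraph is exactly what allows the analysis to collapse to the two clean subcases above; without this hypothesis, additional chords inside $V(K_{r,r})$ would spawn more configurations and the proof would become considerably more delicate.
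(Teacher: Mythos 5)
Your proof is correct and follows essentially the same route as the paper's: it uses the same tight total matchings ($A$, $B$, the swaps $(A\setminus\{v\})\cup\{e\}$, and a perfect matching or one side of the bipartition augmented by an outside element) to identify all coefficients inside $K_{r,r}$ and annihilate those outside. The only cosmetic differences are that you use the direct (equation-through-the-face plus full-dimensionality) formulation rather than the paper's indirect (containing-face) one, and that you spell out the validity argument and the case analysis for outside elements that the paper handles by citation and by a brief ``at least one of $T_1,T_2,T_3$ works'' remark.
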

\begin{proof}
Let $V(K_{r,r})=A{\cup} B$. The validity of the inequality follows from $\nu_{T}(K_{r,r})=r$~\cite{Leidner2012}. Let $\Tilde{F} = \{z \in P_{T}(G) : \pi^{T}z = \pi_{0} \}$ be the face of $P_T(G)$ defined by~\eqref{complete}, and let $F= \{z \in P_{T}(G) : \lambda^{T}z = \lambda_{0} \}$ be a face of $P_T(G)$ such that $\Tilde{F} \subseteq F$. We prove that there exists $a\neq 0$ such that $(\lambda,\lambda_0) = a(\pi,\pi_0)$.

%
%
%
%

Let $e=\{v,w\} \in E[K_{r,r}]$ with $v \in A$, $w \in B$, and define the total matchings $T_{v} := (A \setminus \{v\}) \cup \{e\}$ and $T_{w} := (B \setminus \{w\}) \cup \{e\}$.
Since $|T_v|=|T_w|=r$ and $\chi[A], \chi[B]\in \tilde F \subseteq F$, 
we have $\chi[T_{v}],\chi[T_{w}] \in \Tilde{F}\subseteq F$.
Hence, $\lambda^{T} \chi[A] = \lambda^{T} \chi[T_{v}] = \lambda^{T} \chi[T_{w}]=\lambda_0.$ We deduce therefore that $\lambda_{v}=\lambda_{w}=\lambda_{e}$. Since $e \in E[K_{r,r}]$ was arbitrarily chosen, we deduce $ \lambda_v = \lambda_w = \lambda_{e}$ $\forall v \in A, w \in B, e \in E[K_{r,r}]$.%
%
%

Now, consider an element $d \notin ( A \cup B \cup E[K_{r,r}])$ of $G$. Let $M$ be a perfect matching of $K_{r,r}$. Note that at least one of $T_1:=A \cup \{d\}$ and $T_2 := B \cup \{d\}$, and $T_3:= M \cup \{d\}$ is a total matching. We assume that $T_1$ is a total matching, the other cases following analogously. $\chi[A], \chi[T_1]\in \tilde F \subseteq F$ imply that $\lambda_d = 0$. 
%
%
%
%
%
This fact completes the proof.
\qed
\end{proof}

\subsubsection{Separation of balanced biclique inequalities}

We next address the problem of separating balanced biclique inequalities of fixed cardinality. The following problem is NP-Complete~\cite{pandey2020maximum}.

\smallskip

\noindent\fbox{\parbox[c]{1\textwidth - 2\fboxsep - 2\fboxrule}{%
\noindent Name: Weighted Edge Biclique Decision Problem (\texttt{WEBDP}).

\noindent Input: A complete bipartite graph $G$ with edge weights $u \in {\mathbb{Z}^E}$, a number $k \in \mathbb{N}$. 

\noindent Decide: If there exists a subgraph of $G$ that is a biclique with vertex partition $(A,B)$ such that $\sum_{e \in A \times B} u(e) \geq k$.}}

\medskip

The NP-Completeness of \texttt{WEBDP} implies that the following problem is NP-Complete. 

\medskip

\noindent\fbox{\parbox[c]{1\textwidth - 2\fboxsep - 2\fboxrule}{%
\noindent Name: Weighted Edge Biclique Decision Problem, fixed Cardinality (\texttt{WEBDPC}).

\noindent Input: A complete bipartite graph $G'(V',E')$ with edge weights $u' \in \mathbb{N}^E$, numbers $k',q \in \mathbb{N}$ with $q \leq |V'|$.

\noindent Decide: If there exists a subgraph of $G'$ that is a biclique with vertex partition $(A,B)$, $|A|=|B|=q$, such that $\sum_{e \in A \times B} u(e) \geq k'$.
}}

\medskip

Indeed, \texttt{WEBDPC} is clearly in NP. {Suppose we want to solve \texttt{WEBDP} on input $G,u,k$, where $G$ has bipartition $(A_0,B_0)$. For $r \in \{1,\dots, |A_0|\}$, $\ell \in \{1,\dots, |B_0|\}$, consider the instance ${\cal I}(r,\ell)$ of \texttt{WEBDPC} on input $G',u',k',q$, defined as follows (we assume $r \geq \ell$, the other case being defined analogously). $G'$ is a complete bipartite graph, with bipartition $(A',B')$, where $A'=A_0$, $B'=B_0 \cup \bar B$, and $\bar B$ is a set of $r-\ell$ new vertices. For $e \in E(G)$, we let $u'_e=u_e + \|u\|_\infty$; for $e$ incident to $\bar B$, we let $u'_e = 1+ \max_{e \in E(G)} u'_e$. We moreover let $k'=k + r\ell \cdot \|u\|_\infty + r \cdot (r-\ell) \cdot (1+\max_{e \in E(G)}u'(e))$ and $q=r$. Observe that $u' \in \mathbb{N}^E$ and $k',q \in \mathbb{N}$.

Let $(A^*,B^*)$ be an optimal solution to the problem of selecting a biclique $(A,B)$ of $G'$ with $|A|=|B|=q$ maximizing $\sum_{e \in A \times B} u'(e)$. We claim that $\bar B \subseteq B^*$: if not, let $b \in \bar B\setminus B^*$, and $b' \in B^*\setminus \bar B$ (the former exists by hypothesis, the latter by $|\bar B|<r=|B^*|$). Since each edge incident to $b'$ has value at most $\max_{e \in E(G)}u'_e$, while each edge incident to $b$ has value $1+\max_{e \in E(G)}u'_e$, we contradict the optimality of $(A^*,B^*)$. Hence, ${\cal I}(r,\ell)$ is a yes instance for \texttt{WEBDPC} if and only if $$\begin{array}{lll} \sum_{e \in A^* \times B^*} u'(e) \geq k' & \Leftrightarrow & \sum_{e \in A^* \times (B^*\setminus \bar B)} (u(e) + \|u\|_\infty) + \sum_{e \in A^* \times \bar B} (1+ \max_{e \in E(G)}u'_e) \\ & & \quad \geq k + r\ell \cdot \|u\|_\infty + r \cdot (r-\ell) \cdot (1+\max_{e \in E(G)}u'(e)) \\ & \Leftrightarrow & \sum_{e \in A^* \times (B^*\setminus \bar B)} u(e) \geq k,\end{array}
$$
i.e., if and only if there exists a biclique $(\tilde A,\tilde B)$ of $G$ with $|\tilde A|=r, |\tilde B|=\ell$ such that $\sum_{e \in \tilde A \times \tilde B} u(e) \geq k$. Hence, the instance defined for $G,u,k$ is a yes-instance for \texttt{WEBDP} if and only if at least one of the instances ${\cal I}(r,\ell)$ is a yes-instance for \texttt{WEBDPC}, concluding the proof.}

Consider now the following problem.

\medskip

\noindent\fbox{\parbox[c]{1\textwidth - 2\fboxsep - 2\fboxrule}{%
\noindent Name: Separation Problem for Balanced Biclique Inequalities of a Given Size in 
 complete bipartite graphs (\texttt{SPBBIGS})

\noindent Input: A complete bipartite graph $G(A'\cup B',E)$, a point $(x^*,y^*) \in \mathbb{Q}^{|V|+|E|}_+$,  $r \in \mathbb{N}$, $r \leq |A'|$.

\noindent Decide: If there exists a violated balanced biclique inequality of $P_T(G)$ of $G$ with $r$ nodes on each side of the partition, that is, there exist $A\subseteq A', B \subseteq B', |A|=|B|=|r|$ such that:
\begin{equation}\label{eq:complete:separation}
 \sum\limits_{v \in A \cup B}{x^*_{v}}+\sum\limits_{e \in A \times B}{y^*_{e}} > r.
\end{equation}
}}

\smallskip

Clearly \texttt{SPBBIGS} belongs to NP. 
Let $G,u,k,q$ be an input to \texttt{WEBDPC}, where $G$ has vertex bipartition $(A',B')$. Define $y^*_e=u_e \frac{q}{k-1}$ for each edge $e \in E(G)$, $x^*=0$, and consider the instance of \texttt{SPBBIGS} on input $G,(x^*,y^*)$, $r=q$. There exists an inequality of the form~\eqref{eq:complete:separation} separating $(x^*,y^*)$ if and only if for some $A\subseteq A'$, $B\subseteq B'$ with $|A|=|B|=r$, we have $$ \frac{q}{k-1}\sum_{e \in A \times B} u_e= \sum_{e \in A \times B} y^*_e =\sum_{v \in A \cup B} x^*_v +  \sum_{e \in A \times B} y^*_e  > r = q \, \Leftrightarrow \, \sum_{e \in A \times B} u_e > k -1 \Leftrightarrow \, \sum_{e \in A \times B} u_e \geq  k,$$
if and only if the instance of \texttt{WEBDPC} is a yes-instance. 

\subsection{Non-balanced bicliques inequalities} 

Consider a non-balanced biclique
$K_{r,s}$, with $s>r\geq 2$, of a graph $G$.  
By mimicking~\eqref{complete}, it is natural to ask whether \begin{align}\label{non-balanc-bicl}
    \sum\limits_{v \in V(K_{r,s})}x_v + \sum\limits_{e \in E(K_{r,s})}y_e \leq s 
\end{align} defines a facet. This inequality is indeed valid but not facet-defining. We next give a strengthening of~\eqref{non-balanc-bicl} and show that it defines a facet when $G$ is a bipartite graph. 

{
\begin{proposition}\label{lifting biclique}
Let $K_{r,s}$ with $s>r\geq 2$ be a non-balanced biclique {with vertex partition $(A,B)$. Let $A_1\subseteq A$, $B_1\subseteq B$ with $|A|>|A_1|>|B_1|$ such that, if $|B_1|=0$, then $|A_1|=1$. Define $\beta:=\frac{s+|A_1|-r-|B_1|}{|A_1|-|B_1|}$.}
Then, the \textit{non-balanced lifted biclique inequality}:
\begin{align}\label{biclique}
    \sum\limits_{v \in A} \alpha_v x_{v}+ \sum\limits_{w \in B} {\alpha_w}x_{w}+ \sum\limits_{e \in E(K_{r,s})} {\alpha_e} y_e \leq {|A_1|(\beta-1)+|A|},& & 
\end{align}
where { for each element $d$ of $K_{r,s}$ we let}
\begin{equation*}
  \alpha_{d}=\left\{
  \begin{array}{@{}ll@{}}
    {\beta} & \text{if}\ { d \in A_1 \cup B_1 \cup (A_1 \times B_1)}, \\
    1 & \text{otherwise},
  \end{array}\right.
\end{equation*} 
is facet-defining for $P_T(K_{r,s})$.
\end{proposition}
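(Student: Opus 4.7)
The proof follows the two-step template of Lemma~1: first establish the validity of~\eqref{biclique}, then argue that every face $F=\{z:\lambda^T z=\lambda_0\}$ containing the face $\tilde F$ defined by~\eqref{biclique} satisfies $(\lambda,\lambda_0)=\mu\,(\alpha,\,|A_1|(\beta-1)+|A|)$ for some nonzero scalar $\mu$. Manipulating the definition of $\beta$ yields the identity $(|A_1|-|B_1|)(\beta-1)=s-r$, which in particular makes both stable sets $A$ and $B$ tight for~\eqref{biclique} and is used throughout the argument.

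For validity, every total matching $T$ of $K_{r,s}$ decomposes as $T=V^*\cup M$ with $M$ a matching of $K_{r,s}$ and $V^*\subseteq A\setminus V(M)$ or $V^*\subseteq B\setminus V(M)$. Set $m_1=|M\cap (A_1\times B_1)|$, and $a=|V^*\cap A_1|$ in the first case or $b=|V^*\cap B_1|$ in the second. A direct rearrangement shows that the LHS of~\eqref{biclique} evaluates, in the two cases, to $(a+m_1)(\beta-1)+|T|$ or $(b+m_1)(\beta-1)+|T|$ respectively. Since each edge in $M\cap(A_1\times B_1)$ covers a distinct vertex of $A_1$ and of $B_1$, one has $a+m_1\leq |A_1|$ and $b+m_1\leq |B_1|$; moreover $|T|\leq r$ in the $A$-side case and $|T|\leq s$ in the $B$-side case. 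Since $\beta-1>0$, the $A$-side case gives LHS $\leq |A_1|(\beta-1)+r$, and the $B$-side case gives LHS $\leq |B_1|(\beta-1)+s$, which coincides with the RHS via the identity above.

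For the facet step, let $F\supseteq\tilde F$ be a face of $P_T(K_{r,s})$. Besides $\chi[A]$ and $\chi[B]$, the tightness criterion of the previous paragraph identifies many tight total matchings of the form $(A\setminus\{v\})\cup\{e\}$: either $v\in A\setminus A_1$ (with $e$ any edge incident to $v$) or $v\in A_1$ with $e=\{v,w\}$, $w\in B_1$; together with the symmetric $(B\setminus\{w\})\cup\{e\}$ under dual conditions. Each such swap yields an equality of the form $\lambda_v=\lambda_e$ (or $\lambda_w=\lambda_e$). Chaining the resulting families of equalities, one shows that $\lambda$ is constant (equal to some $\gamma$) on $A_1\cup B_1\cup(A_1\times B_1)$ and constant (equal to some $\mu$) on the complementary set of elements of $K_{r,s}$. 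Substituting into $\lambda^T\chi[A]=\lambda^T\chi[B]=\lambda_0$ and using $|A_1|>|B_1|$ to invert yields $\gamma=\beta\mu$ and $\lambda_0=\mu(|A_1|(\beta-1)+|A|)$, matching~\eqref{biclique} up to the scalar $\mu$. Full-dimensionality of $P_T(K_{r,s})$ forces $\mu\neq 0$ whenever $F$ is a proper face.

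The delicate point lies in the edges $e=\{v,w\}$ with exactly one endpoint in $A_1\cup B_1$: their coefficient in~\eqref{biclique} is $1$, yet only one of the two natural swaps is tight. For instance, with $v\in A_1$ and $w\in B\setminus B_1$, the swap $(A\setminus\{v\})\cup\{e\}$ fails tightness (its LHS would need $\alpha_e=\beta$, whereas $e\notin A_1\times B_1$); the remedy is the $B$-side swap $(B\setminus\{w\})\cup\{e\}$, which is tight precisely because $w\notin B_1$ gives $b+m_1=|B_1|$ in the tightness formula above. A symmetric remedy covers $v\in A\setminus A_1$, $w\in B_1$. The degenerate setting $|B_1|=0$, $|A_1|=1$, which recovers the original form of the non-balanced lifted biclique inequality, fits the same framework, and the hypothesis $|A_1|>|B_1|$ guarantees that $\beta$ is well-defined.
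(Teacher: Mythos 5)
Your proof is correct and follows essentially the same route as the paper: validity via a case analysis on which side of the bipartition the vertex part of the total matching lies (your unified count $(a+m_1)(\beta-1)+|T|$ is a slightly cleaner packaging of the paper's three cases), and facet-definingness via the swap argument of Lemma~\ref{thm:balanced} combined with $\lambda^T\chi[A]=\lambda^T\chi[B]$. Your explicit treatment of edges with exactly one endpoint in $A_1\cup B_1$ fills in a detail the paper leaves implicit in its appeal to Lemma~\ref{thm:balanced}, but the argument is the same.
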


\begin{proof}
$s>r$ implies that $s+|A_1|-r-|B_1|>|A_1|-|B_1|$. Because of $|A_1|>|B_1|$, we then deduce $\beta >1$. Let $A_2:=A\setminus A_1$, $B_2:=B\setminus B_1$. We deduce $s+|A_1|-r-|B_1|=|B_2|-|A_2|$, $\beta=\frac{|B_2|-|A_2|}{|A_1|-|B_1|}$, and $\beta|A_1| + |A_2|=\beta |B_1|+|B_2|$.

We first show that~\eqref{biclique} is valid for $P_T(K_{r,s})$. Let $T$ be an inclusionwise maximal total matching of $K_{r,s}$ and $z$ its characteristic vector. Assume first $T\cap A \neq \emptyset$. Then $T \cap B = \emptyset$. For each $v \in A$, $T$ contains exactly one of $v$ and an edge incident to $v$. Since all edges incident to $v \in A_2$ have the same coefficients in~\eqref{biclique} as $v$, we can assume w.l.o.g.~that $A_2\subseteq T$. Then the left-hand side of~\eqref{biclique} is maximized when $T = A_2 \cup A_1$ and is therefore at most $|A_2|+\beta|A_1|=|A|-|A_1|+\beta|A_1|=|A_1|(\beta-1)+|A|$. We deduce that $z$ satisfies~\eqref{biclique}.

Next, assume $T\cap B \neq \emptyset$. Then $T \cap A = \emptyset$. Similarly to the above, we conclude that w.l.o.g.~$T=B_1\cup B_2$, hence the left-hand side of~\eqref{biclique} computed in $z$ is at most $|B_2|+\beta |B_1|=\beta|A_1| + |A_2|$ and again we deduce that $z$ satisfies~\eqref{biclique}. 

Last, assume $T\cap A, T \cap B=\emptyset$. Then $T$ is a matching of $K_{r,s}$. $T$ contains at most $\min\{|A_1|,|B_1|\}=|B_1|$ edges with coefficient $\beta$ and $\min\{|A|,|B|\}$ edges in total. Hence, the left-hand side of~\eqref{biclique} computed in $z$ is at most $\beta |B_1| + |B|-|B_1|= \beta|B_1| + |B_2| = \beta|A_1| + |A_2|$,
concluding the proof that~\eqref{biclique} is valid for $P_T(K_{r,s})$.

We next show that the face $\tilde F$ of $P_T(G)$ defined by~\eqref{biclique} is a facet of $P_T(G)$. Let $F= \{z \in P_{T}(G) : \lambda^{T}z = \lambda_{0} \}$ be a face of $P_T(G)$ such that $\tilde{F} \subseteq F$. First observe that $\chi[A], \chi[B] \in \tilde F \subseteq F$. Moreover, by hypothesis, $|B_2|>|A_2|\geq 1$. Hence, we can repeat the same argument as in the proof of Lemma~\ref{thm:balanced} and deduce that $\lambda_v=\lambda_w=\lambda_e=:\lambda_2$ for all $v \in A_2, w \in B_2$ and $e \in E(K_{r,s})$ with at least one endpoint in $A_2 \cup B_2$. Again by hypothesis, $|A_1|\geq 1$. If $|B_1|\geq 1$, then we deduce $\lambda_v=\lambda_w=\lambda_e=:\lambda_1$ for all $v \in A_1, w \in B_1$ and $e \in A_1 \times B_1$. Moreover, $\tilde F \subseteq F$ implies $\lambda^T\chi[A]= \lambda^T\chi[B]$, which in turn implies $\lambda_2 |A_2| + \lambda_1 |A_1|= \lambda_2|B_2|+\lambda_1|B_1|$ or equivalently, $\lambda_1=\lambda_2 \frac{|B_2|-|A_2|}{|A_1|-|B_1|}=\lambda_2 \cdot \beta$, as required.  If $|B_1|=0$, then by hypothesis $A_1=\{v_1\}$, and using again $\lambda^T\chi[A]= \lambda^T\chi[B]$, we deduce $\lambda_{v_1} + (|A|-1) \lambda_2 = \lambda_2 |B|$, or equivalently $\lambda_{v_1}=\lambda_2 (|B|-|A|-1)=\lambda_2 \frac{|B_2|-|A_2|}{|A_1|-|B_1|}=\lambda_2 \cdot \beta$, as required.\qed
\end{proof}
}

The following proposition shows that, in a bipartite graph $G$, the inequalities from Proposition~\ref{lifting biclique} define facets.

\begin{lemma}\label{thm:non-balanced}
Let $G(V,E)$ be a bipartite graph and $K_{r,s}$ a subgraph of $G$. The non-balanced lifted biclique inequalities \eqref{biclique}
are facet-defining for $P_{T}(G)$.
\end{lemma}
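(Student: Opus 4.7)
The plan is to extend the argument of Proposition~\ref{lifting biclique} from $P_T(K_{r,s})$ to $P_T(G)$. I would first verify that~\eqref{biclique} is valid on $P_T(G)$, and then analyze any face $F = \{z \in P_T(G) : \lambda^T z = \lambda_0\}$ containing the face $\tilde F$ of $P_T(G)$ defined by~\eqref{biclique}.

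A crucial preliminary observation is that bipartiteness of $G$ forces $K_{r,s}$ to be an induced subgraph of $G$: if $(U, W)$ is the bipartition of $G$, then, up to swapping $U$ and $W$, $A \subseteq U$ and $B \subseteq W$, so no within-$A$ or within-$B$ edges exist in $G$ and any $A$-to-$B$ edge of $G$ already belongs to $E(K_{r,s})$. Consequently, every total matching of $K_{r,s}$ is also a total matching of $G$. Validity of~\eqref{biclique} for $P_T(G)$ then follows immediately, since for any total matching $T$ of $G$ the restriction $T \cap (V(K_{r,s}) \cup E(K_{r,s}))$ is a total matching of $K_{r,s}$ with the same~\eqref{biclique} value as $T$. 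To pin down $\lambda$ on elements of $K_{r,s}$, I would observe that the tight configurations used in the proof of Proposition~\ref{lifting biclique}---namely $\chi[A]$, $\chi[B]$, and the replacements $T_v = (A \setminus \{v\}) \cup \{e\}$, $T_w = (B \setminus \{w\}) \cup \{e\}$---are still tight total matchings of $G$; repeating that chain of equalities yields $\lambda_d = a \cdot \alpha_d$ for every $d \in V(K_{r,s}) \cup E(K_{r,s})$, together with $\lambda_0 = a \cdot (|A_1|(\beta-1)+|A|)$, for some $a \neq 0$.

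It remains to show $\lambda_d = 0$ for every element $d$ of $G$ outside $V(K_{r,s}) \cup E(K_{r,s})$. I would produce, for each such $d$, a tight total matching $T^* \in \{A, B\}$ with $T^* \cup \{d\}$ still a total matching of $G$; then $\chi[T^*], \chi[T^* \cup \{d\}] \in \tilde F \subseteq F$ forces $\lambda_d = 0$. Bipartiteness is decisive here. If $d$ is a vertex outside $V(K_{r,s})$, then $d$ lies in exactly one of $U, W$ and hence has no neighbor in the corresponding part ($A$ or $B$), so $T^* = A$ works if $d \in U$ and $T^* = B$ works if $d \in W$. If $d$ is an edge outside $E(K_{r,s})$, bipartiteness plus the inducedness of $K_{r,s}$ forces at least one endpoint of $d$ to lie outside $V(K_{r,s})$ (else $d$ would join $A$ to $B$ and so belong to $E(K_{r,s})$); a short case split on whether the $V(K_{r,s})$-endpoint of $d$ (if any) lies in $A$ or $B$ then picks out $T^* = A$ or $T^* = B$.

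The main obstacle is the vertex case: without bipartiteness, an external vertex $d$ could have neighbors in both $A$ and $B$, and since every tight total matching of $K_{r,s}$ for~\eqref{biclique} covers either all of $A$ or all of $B$ (pure matchings of $K_{r,s}$ fall short of the right-hand side because $|A_1|>|B_1|$), no augmentation $T^* \cup \{d\}$ would remain a total matching of $G$. Bipartiteness bypasses this obstacle by confining $d$'s neighbors to a single part, which is exactly what the hypothesis of the lemma buys us.
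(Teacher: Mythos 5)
Your proposal is correct and follows essentially the same route as the paper: both reduce the elements of $K_{r,s}$ to Proposition~\ref{lifting biclique} and handle each external element $d$ via the tight total matching $A\cup\{d\}$ or $B\cup\{d\}$, with bipartiteness (and the resulting inducedness of $K_{r,s}$) guaranteeing that one of the two is available. The only difference is bookkeeping: the paper certifies the facet directly by exhibiting the required number of affinely independent tight points and a block-triangular rank argument, while you use the equivalent indirect method of pinning down the coefficients of any face containing $\tilde F$.
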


\begin{proof}
Let $V(K_{r,s})= A \cup B$ and
$F$ be the face induced by a non-balanced lifted biclique inequality associated to $K_{r,s}$.
%
%
By Proposition~\ref{lifting biclique}, we have a set ${\cal S}$ of $|V(K_{r,s})|+|E(K_{r,s})|$ affinely independent points that lie in $F$ whose support is contained in the elements of $K_{r,s}$. For each element $d$ of $G$ with $d \notin Q:=A \cup B \cup \revision{E[K_{r,s}]}$, we give a total matching $M_d$ such that $\chi[M_d] \in F$, and ${\cal S} \cup \{M_d\}_{d \in (V \cup E)\setminus Q}$ is linearly independent. 

Let $d$ be an element of $G$ with $d \notin Q$. If $d$ is not adjacent to $A$, let $M_d = A \cup \{d\}$. Else, $d$ is not adjacent to $B$ since $G$ is bipartite, and we let $M_d = B \cup \{d\}$. Clearly, $M_d \in F$, and the matrix having as columns vectors from ${\cal S} \cup_{d \in (V \cup E) \setminus Q} M_d$ has the following form:
\begin{center}
$M=\left[
 \begin{array}{c|c}
M_1 & M_2  \\ \hline
\mathbf{0} & {\chi[\{d\}]}_{d \in (V\cup E)\setminus Q}
\end{array}\right],
$
\end{center}
where the first set of rows is indexed over elements of $K_{r,s}$, $M_1$ is the collection of vectors from ${\cal S}$ restricted to nodes of $K_{r,s}$, and $M_2$ is an appropriate matrix. Since $M_1$ has full rank and the bottom right submatrix of $M$ is the identity matrix, $M$ has full rank, and the thesis follows.
\qed
\end{proof}
\section{The Total Matching Polytope of Complete Bipartite Graphs}\label{sec:ef}

In this section, we give a complete and non-redundant description of $P_T(G)$ when $G$ is a complete bipartite graph. Our argument is as follows. In Section~\ref{sec:ef:algo}, we give a (simple) algorithm for solving the maximum weighted total matching problem on a complete bipartite graph $G$. In Section~\ref{sec:ef:ef}, we use this algorithm, results on perfect graphs, and Balas' classical theorem on the convex hull of the union of polytopes to give a compact extended formulation $Q$ for $P_T(G)$. Then, in Section~\ref{sec:ef:proj}, we study the projection cone associated to $Q$ to deduce the following.

\begin{theorem}\label{thm:complete-description}
Let $G$ be a complete bipartite graph. A complete and non-redundant description of $P_{T}(G)$ is given by the basic inequalities~\eqref{m6:c1} -- \eqref{m6:c3} and, for each balanced (resp., non-balanced) complete bipartite subgraph of $G$, the balanced biclique inequality~\eqref{complete} (resp., non-balanced lifted biclique inequality~\eqref{biclique}).
\end{theorem}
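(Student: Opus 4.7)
The plan follows the three-step outline suggested by the paper's subsection structure: first, describe a simple algorithm for the maximum weighted total matching problem on a complete bipartite graph $G = K_{r,s}$; second, lift the structural insight into a compact extended formulation via Balas' union-of-polytopes theorem; and third, project back onto the original $(x,y)$-space to recover the claimed inequality system.

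First I would exploit the key structural feature of total matchings in $G$ with parts $A$ and $B$: every vertex in $A$ is adjacent in $G$ to every vertex in $B$, so the vertex-part of any total matching $T$ must lie entirely in $A$ or entirely in $B$. Thus $P_T(G) = \conv(P_A \cup P_B)$, where $P_A$ (resp.\ $P_B$) is the convex hull of characteristic vectors of total matchings whose vertex-support is contained in $A$ (resp.\ $B$). By Proposition~\ref{Total}, each of $P_A, P_B$ equals the stable set polytope of an induced subgraph $H_A$ (resp.\ $H_B$) of the total graph $T(G)$, obtained by deleting the elements corresponding to $B$ (resp.\ $A$). A short argument shows that $H_A$ is isomorphic to the line graph of the bipartite multigraph obtained from $G$ by attaching a pendant edge at each vertex of $A$; since line graphs of bipartite graphs are perfect, a clique-based description of $P_A$ via~\cite{Chvatal} follows, in which the maximal cliques are stars around the vertices of $G$.

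Next, applying Balas' disjunctive formulation~\cite{balas1998disjunctive} to the union $P_A \cup P_B$ produces a compact extended formulation $Q$ for $P_T(G)$ with polynomially many variables (two copies of the $(x,y)$-variables plus two scalar multipliers encoding the convex combination) and polynomially many inequalities (the scaled star-clique constraints of $H_A$ and $H_B$, together with the fixing constraints $x_w = 0$ in each copy). The simple algorithm provides an independent sanity check that $Q$ indeed captures the correct combinatorial object, since one can verify that linear optimization over $Q$ reproduces the algorithm's output.

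The main technical step, and the main obstacle, is the projection of $Q$ onto the $(x,y)$-space. I would apply the projection-cone technique from~\cite{SurveyExtended}: enumerate the extreme rays of the projection cone and show that each one yields either a basic inequality~\eqref{m6:c1}--\eqref{m6:c3}, a balanced biclique inequality~\eqref{complete}, or a non-balanced lifted biclique inequality~\eqref{biclique} for some complete bipartite subgraph of $G$. The delicate part is checking that combining the $A$-side star-clique constraints with the $B$-side star-clique constraints through Balas' disjunction produces precisely the biclique support structure, and that the lifting coefficient $\beta$ appearing in~\eqref{biclique} emerges naturally from the optimal balance of multipliers on the two disjuncts $\lambda_A, \lambda_B$; enumerating which choices of multipliers give extremal rays, and matching them one-to-one with the claimed inequalities, will require a careful case analysis driven by the sizes $|A_1|, |B_1|$ of the "heavy" sets appearing in Proposition~\ref{lifting biclique}. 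Once the projection is shown to yield precisely the listed inequalities, non-redundancy is immediate: Lemma~\ref{thm:balanced}, Lemma~\ref{thm:non-balanced}, and the facet-defining results for~\eqref{m6:c1}--\eqref{m6:c3} from~\cite{Polyhedra} already establish that every inequality in the description defines a facet of $P_T(G)$.
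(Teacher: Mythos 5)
Your architecture matches the paper's exactly: the disjunction $P_T(G)=\conv(P_A\cup P_B)$ based on the observation that the vertex-support of a total matching lies in one side, perfection of the two restricted total graphs, the clique description of $P_A$ and $P_B$, Balas' theorem, and projection via the projection cone of~\cite{SurveyExtended}. One sub-step where you genuinely diverge is the perfection argument: the paper proves Lemma~\ref{lem:perfect-graphs} by directly excluding odd holes in $T(K_{r,s})\setminus B$ and its complement and invoking the Strong Perfect Graph Theorem, whereas you observe that $T(K_{r,s})\setminus B$ is the line graph of the bipartite graph obtained from $K_{r,s}$ by attaching a pendant edge at each vertex of $A$. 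That identification is correct (pendants map to the vertices $v'_i$, original edges to the vertices $q(i,j)$, and adjacencies agree), it needs only the classical perfection of line graphs of bipartite graphs rather than the SPGT, and it delivers Corollary~\ref{cor:PR-PS} directly since all maximal cliques of such a line graph are stars. This is arguably cleaner than the paper's argument.

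The genuine gap is that the entire content of the projection step is asserted rather than carried out, and this is where essentially all of the difficulty of the theorem lives. ``Enumerate the extreme rays of the projection cone'' is not a step one can execute by inspection: the cone lives in $\mathbb{R}^{2(|V|+|E|+1)}_{\geq 0}$, and the paper must first show (Lemma~\ref{cone:rays}) that the edge- and $\lambda$-indexed multipliers can be eliminated, reducing each extreme ray to a nonnegative vector $(u^1,u^2)$ on the vertices satisfying $2(|A|+|B|)-1$ linearly independent constraints of the form $u^j_v=0$, $u^1_v+u^1_w=u^2_v+u^2_w$, or $\sum_v u^1_v=\sum_v u^2_v$. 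It then needs the machinery of canonical supporting sets and the two-colored auxiliary graph $G({\cal S})$, and the structural Lemma~\ref{lem:cycle} showing that $G({\cal S})$ is a forest with exactly one non-trivial component whose nodes carry alternating colors --- this is precisely where the biclique support, the two coefficient classes $A_1\cup B_1\cup(A_1\times B_1)$ versus the rest, and the value of $\beta$ from Proposition~\ref{lifting biclique} actually emerge, and where one must rule out rays giving dominated or already-known inequalities (e.g., two non-trivial components collapsing to a single biclique inequality). Your proposal correctly predicts that the answer depends on ``the optimal balance of multipliers on the two disjuncts,'' but without this analysis the claim that the projection yields \emph{only} the listed inequalities is unproved. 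The non-redundancy part of your argument is fine as stated, since every inequality in the list is separately shown to be facet-defining.
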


Throughout the section, fix a complete bipartite graph $G=K_{r,s}$, with, as usual, $V(G):=V=A\cup B$, $A=\{v_1,\dots,v_r\}$, and $B=\{w_1,\dots,w_s\}$. 

\subsection{Algorithm}\label{sec:ef:algo}
A total matching $T$ of $K_{r,s}$ satisfies at least one of $T \cap A = \emptyset$ and
$T \cap B = \emptyset$. For $U \in\{A,B\}$, let $T(K_{r,s})\setminus U$ be the subgraph of the total graph $T(K_{r,s})$ of $K_{r,s}$ obtained by removing nodes corresponding to elements of $U$ and the edges incident to them.  

By Proposition~\ref{Total}, we can solve the maximum weighted total matching problem on $G$ by solving the maximum weighted stable set on $T(K_{r,s})\setminus U$ for $U \in\{R,S\}$, and selecting the solution of maximum weight. The next lemma shows that such graphs have a special structure.

\begin{lemma}\label{lem:perfect-graphs}
Let $U \in\{A,B\}$. The graph $T(K_{r,s})\setminus U$ is perfect.
\end{lemma}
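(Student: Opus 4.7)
The plan is to realize $T(K_{r,s}) \setminus U$ as the line graph of a bipartite graph, and then invoke the classical result that line graphs of bipartite graphs are perfect (see, e.g.,~\cite{Schrijver2003}). Recall that this result follows from K\"onig's edge-coloring theorem combined with the observation that in a triangle-free graph every triangle of the line graph comes from a star, so $\omega(L(H)) = \Delta(H) = \chi'(H) = \chi(L(H))$ for every bipartite $H$, and the identity persists for all induced subgraphs of $L(H)$ (each of which is itself the line graph of a subgraph of $H$).

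By the symmetry between the two sides of the bipartition, I handle only $U = B$. From $K_{r,s}$ I build an auxiliary bipartite graph $H$ by adjoining, for every $v_i \in A$, a new pendant vertex $v_i'$ adjacent only to $v_i$. Formally,
\[
V(H) = A \cup B \cup \{v_1',\dots,v_r'\}, \qquad E(H) = E(K_{r,s}) \cup \bigl\{\{v_i,v_i'\} : i=1,\dots,r\bigr\},
\]
with bipartition $\bigl(A,\, B \cup \{v_1',\dots,v_r'\}\bigr)$. I then define a map $\phi \colon V(L(H)) \to V(T(K_{r,s}) \setminus B)$ sending each pendant edge $\{v_i,v_i'\}$ to the vertex $v_i \in A$, and each original edge $\{v_i,w_j\}$ of $K_{r,s}$ to itself, now viewed as a vertex of the total graph. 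Since $V(T(K_{r,s}) \setminus B) = A \cup E(K_{r,s})$, the map $\phi$ is a bijection.

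To conclude it suffices to verify that $\phi$ preserves adjacency, which is a routine case split. Two pendant edges with distinct indices share no endpoint in $H$, which matches the fact that $A$ is a stable set in $T(K_{r,s})\setminus B$. A pendant edge $\{v_i,v_i'\}$ and an original edge $\{v_{i'},w_j\}$ share an endpoint iff $i=i'$, exactly matching the adjacency $v_i \sim \{v_{i'}, w_j\}$ in $T(K_{r,s})$. Finally, two original edges share an endpoint in $H$ iff they share an endpoint in $K_{r,s}$, matching the usual line-graph adjacency within $E(K_{r,s})$. Thus $T(K_{r,s}) \setminus B \cong L(H)$ and the lemma follows. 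The only step in this argument that requires any thought is the choice of $H$: the pendants serve as a bipartite-preserving surrogate for ``loops at the vertices of $A$'' in a natural line-graph representation of $T(K_{r,s})\setminus B$, and they must be attached on the $A$-side so that $H$ remains bipartite.
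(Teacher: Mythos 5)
Your proof is correct, but it takes a genuinely different route from the paper's. The paper establishes perfection via the Strong Perfect Graph Theorem: it argues directly that neither $T(K_{r,s})\setminus B$ nor its complement contains an odd hole on five or more vertices, using the structure of the vertices $q(i,j)$ and a parity argument on the cycle. You instead exhibit $T(K_{r,s})\setminus B$ explicitly as the line graph $L(H)$ of a bipartite graph $H$ obtained by attaching a pendant edge at each vertex of $A$; your isomorphism check is routine but complete (the pendants are correctly attached on the $A$-side so that $H$ stays bipartite), and perfection then follows from the classical K\"onig-based argument that line graphs of bipartite graphs are perfect, since every induced subgraph of $L(H)$ is $L(H')$ for a (bipartite) subgraph $H'$ of $H$ and $\omega(L(H'))=\Delta(H')=\chi'(H')=\chi(L(H'))$. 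Your approach is more elementary in that it avoids the Strong Perfect Graph Theorem entirely, and it buys more downstream: in a triangle-free graph the cliques of the line graph are exactly the stars, which immediately yields the paper's Observation~\ref{obs:linear-cliques} on the maximal cliques of $T(K_{r,s})\setminus U$, and the identity $\STAB(L(H)) = \{\hbox{matching polytope of } H\}$ for bipartite $H$ recovers the explicit descriptions of $P_A$ and $P_B$ in Corollary~\ref{cor:PR-PS} directly from the bipartite matching polytope, without passing through Chv\'atal's clique-inequality description of stable set polytopes of perfect graphs. The paper's hole-checking argument, by contrast, is self-contained at the level of the total graph and does not require recognizing the line-graph structure.
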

\begin{proof}
Suppose w.l.o.g.~that $U = B$.
We denote by $q(i,j)$ (resp., $v'_i$), the vertex or $T(K_{r,s})$ associated to the edge $e=\{v_{i},w_{j}\}$ (resp., vertex $v_i \in A$) of the original graph $K_{r,s}$.
We prove that neither $T(K_{r,s})\setminus B$ nor $\overline{T(K_{r,s} )\setminus B}$ contain an odd \revision{hole} with $5$ or more nodes. The statement then follows from the well-known characterization of perfect graphs~\cite{chudnovsky2006strong}. 

We start with  $T(K_{r,s})\setminus B$. By construction, for $i=1,\dots, r$, every vertex $v'_{i}$ lies in exactly one inclusionwise maximal clique (corresponding to edges adjacent to $v_i$ in $K_{r,s}$) and it is not adjacent to any node outside this clique. Thus, no odd \revision{hole} with at least $5$ nodes contains a vertex $v'_{i}$.
Hence, an odd \revision{hole} $C$ contains only vertices of the kind $q(i,j)$. We call $i$ (resp.~$j$) the \emph{first} (resp.~\emph{second}) \emph{entry} of the vertex $q(i,j)$. Note that no three consecutive vertices of $C$ can share the same first or second entry; on the other hand, two consecutive vertices of $C$ must share the first or the second entry. Hence, if we let 
$C=\{q_0,q_1,\dots,q_{k-1}\}$, we can assume w.l.o.g.~that, for $\ell$ odd, $q_\ell$ shares the first entry with $q_{\ell+1}$ and the second entry with $q_{\ell-1}$ (indices are taken modulo $k$). However, this contradicts $k$ being odd. 

We now focus on $\overline{T(K_{r,s}) \setminus B}$. Let $C=\{q_0,\dots, q_{k-1}\}$, $k \geq 5$ be an odd \revision{hole} in $\overline{T(K_{r,s}) \setminus B}$. First observe that $V(C)\cap A=\emptyset$. Indeed, suppose by contradiction that $v'_i \in V(C) \cap A$, and let w.l.o.g.~$v'_i=q_0$. Then $q_{\lceil\frac{k}{2}\rceil}=q(i,j)$ and $q_{\lfloor\frac{k}{2}\rfloor}=q(i,\ell)$ for some indices $j,\ell$. Then $q_{\lceil\frac{k}{2}\rceil}$ and $q_{\lfloor\frac{k}{2}\rfloor}$ are not adjacent in $\overline{T(K_{r,s}) \setminus B}$, a contradiction. Hence, $V(C) \cap A=\emptyset$, and let w.l.o.g.~$q_0=q(1,1)$. 


\revision{When $k=5$, $C$ is also an odd hole in $T(K_{r,s}) \setminus B$, which as argued above cannot exist.} Hence assume $k \geq 7$. We can assume w.l.o.g.~that~$q_{\lfloor\frac{k}{2}\rfloor}=q(1,2)$, $q_{\lceil\frac{k}{2}\rceil}=q(2,1)$. Since $q_{\lfloor\frac{k}{2}\rfloor-1}$ is not adjacent to $q_0$ or $q_{\lceil\frac{k}{2}\rceil}$, we must have $q_{\lfloor\frac{k}{2}\rfloor-1}=q(t,1)$ for $t\neq 1,2$. Symmetrically, $q_{\lceil\frac{k}{2}\rceil+1}=q(1,p)$ for $p\neq 1,2$. Again, using the fact that $q_{\lfloor\frac{k}{2}\rfloor-1}$ and $q_{\lceil\frac{k}{2}\rceil+1}$ are not adjacent, we deduce $t=1$ or $p=1$, a contradiction. \qed
\end{proof}

Note that, if we consider $T(K_{r,s})$ instead of $T(K_{r,s}) \setminus U$ for $U \in \{A,B\}$, the graph is no longer perfect. For instance, it can be easily checked that the total graph $T(K_{2,2})$ contains an odd hole. 

Lemma~\ref{lem:perfect-graphs} allows us to use classical semidefinite techniques~\cite{SeparationOptimization} to solve the maximum weighted stable set problem on $T(K_{r,s})\setminus U$ for $U \in \{A,B\}$. However, in our case we do not need to employ semidefinite programming, because of the following observation.

\begin{obs}\label{obs:linear-cliques}
Let $U \in \{A,B\}$. The cliques of $T(K_{r,s})\setminus U$ correspond in $G$ either to a node in $\{A,B\}\setminus \{U\}$ and the edges incident to it, or to edges incident to a node in $U$. In particular, $T(K_{r,s})\setminus U$ has $O(r+s)$ maximal cliques. \end{obs}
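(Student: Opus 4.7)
The plan is to perform a direct case analysis on the maximal cliques of $T(K_{r,s})\setminus U$. Assume without loss of generality that $U=B$, so the vertices of $T(K_{r,s})\setminus B$ split into the $v'_i$ (for $v_i \in A$) and the $q(i,j)$ (for edges $\{v_i,w_j\}$), with adjacencies recalled from the proof of Lemma~\ref{lem:perfect-graphs}: $v'_i$ is adjacent to $q(k,\ell)$ iff $k=i$; $q(i,j)$ is adjacent to $q(k,\ell)$ iff $i=k$ or $j=\ell$; and no two $v'_i$'s are adjacent. I will split into the cases where a maximal clique contains some $v'_i$ or consists only of $q$-vertices.

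In the first case, all other vertices of the clique must be adjacent to $v'_i$ and therefore of the form $q(i,\cdot)$; such vertices share the first entry $i$ and are pairwise adjacent, so maximality forces the clique to equal $\{v'_i\} \cup \{q(i,j) : j \in [s]\}$. In $G$ this corresponds to $v_i \in A = \{A,B\}\setminus\{U\}$ together with all edges incident to $v_i$, matching the first type in the statement.

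The second case requires a short subclaim, which I view as the main obstacle: any clique consisting only of $q$-vertices has all its members sharing a common first entry or all sharing a common second entry. The idea is that if $q(i_1,j_1),q(i_2,j_2)$ in the clique share the first entry ($i_1=i_2$, hence $j_1 \neq j_2$ by distinctness) but some third $q(i_3,j_3)$ of the clique has $i_3 \neq i_1$, then adjacency to both of the first two forces $j_3=j_1$ and $j_3=j_2$, contradicting $j_1 \neq j_2$. Cliques sharing a common first entry $i$ are not maximal, since they extend by adding $v'_i$ and so were already handled in the first case; therefore the maximal cliques of this case are exactly the column cliques $\{q(i,j) : i \in [r]\}$, one for each $j \in [s]$, each corresponding in $G$ to the edges incident to $w_j \in B = U$.

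Summing over the two cases yields exactly $r$ cliques of the first type and $s$ of the second, for a total of $r+s = O(r+s)$ maximal cliques, as claimed.
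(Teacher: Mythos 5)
Your argument is correct and is essentially the reasoning the paper leaves implicit: the observation is stated without proof, and the only ingredient the paper records (inside the proof of Lemma~\ref{lem:perfect-graphs}) is that each $v'_i$ lies in exactly one maximal clique and has no neighbors outside it, which is your first case; your subclaim that a clique of $q$-vertices must share a common first or second entry completes the classification in the natural way. The only (harmless) caveat is the degenerate case $r=1$, where the column cliques $\{q(1,j)\}$ are singletons absorbed into the unique clique containing $v'_1$, so the count is not exactly $r+s$ there, but the stated correspondence and the $O(r+s)$ bound still hold.
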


\subsection{Extended formulation}\label{sec:ef:ef}

Define the two polytopes
$$P_A:=\{ z \in P_T(K_{r,s}) : z_w = 0 \hbox{ for $w \in B$}\}, \quad P_B:=\{ z \in P_T(K_{r,s}) : z_v = 0 \hbox{ for $v \in A$}\}.$$
Using Lemma~\ref{lem:perfect-graphs} and the description of the stable set polytope of perfect graphs~\cite{Chvatal}, we can describe $P_A$ (and analogously $P_B$) completely using cliques inequalities:
$$P_A=\{ z \geq  0 : \sum\limits_{u \in K}z_u \leq 1 \hbox{ for $K$ clique of }V(T(K_{r,s})\setminus B)\},$$
and Observation~\ref{obs:linear-cliques} implies that this description has linear size. 
We deduce the following.

\begin{corollary}\label{cor:PR-PS}
$$P_A= \bigg\{(x,y) \in \mathbb{R}_{\geq 0}^{|A|+|E|}: \qquad x_v + \sum_{e \in \delta(v)}y_e \leq 1, \quad  \forall v \in A; \qquad \sum_{e \in \delta(w)}y_e \leq 1, \quad \forall w \in B \bigg\},$$
$$P_B= \bigg\{(x,y) \in \mathbb{R}_{\geq 0}^{|B|+|E|}: \qquad x_w + \sum_{e \in \delta(w)}y_e \leq 1, \quad \forall w \in B; \qquad \sum_{e \in \delta(v)}y_e \leq 1, \quad \forall v \in A \bigg\}.$$
\end{corollary}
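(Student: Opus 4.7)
The plan is to derive both descriptions directly from the clique-inequality description of the stable set polytope of a perfect graph, using the ingredients already assembled in the excerpt. I will spell out the argument for $P_A$; the one for $P_B$ is entirely symmetric (just swap the roles of $A$ and $B$).

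First, I would identify $P_A$ with a stable set polytope. By definition, total matchings $T$ of $K_{r,s}$ with $T \cap B = \emptyset$ are exactly the subsets $T \subseteq A \cup E(K_{r,s})$ that are pairwise independent in $K_{r,s}$. Under the bijection between elements of $G$ and vertices of the total graph $T(K_{r,s})$, such sets correspond precisely to stable sets of the subgraph $T(K_{r,s}) \setminus B$. Hence $P_A = \STAB(T(K_{r,s}) \setminus B)$ (viewed in the appropriate coordinates, with the $B$-coordinates suppressed since they are fixed to $0$).

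Second, I would invoke perfection. Lemma~\ref{lem:perfect-graphs} says $T(K_{r,s}) \setminus B$ is perfect. By Chvátal's theorem on the stable set polytope of perfect graphs (see, e.g.,~\cite{Chvatal}), we therefore have
\[
P_A \;=\; \Bigl\{ z \geq 0 \;:\; \sum_{u \in K} z_u \leq 1 \text{ for every clique } K \text{ of } T(K_{r,s}) \setminus B \Bigr\}.
\]
Because every clique inequality is dominated by an inequality over an inclusion-wise maximal clique, it suffices to sum only over maximal cliques.

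Third, I would translate the maximal cliques back to inequalities in the original $(x,y)$-variables using Observation~\ref{obs:linear-cliques}. That observation tells us the maximal cliques of $T(K_{r,s}) \setminus B$ are of exactly two types:
\begin{itemize}
\item a vertex $v \in A$ together with all edges $\delta(v)$ incident to it in $K_{r,s}$; this yields $x_v + \sum_{e \in \delta(v)} y_e \leq 1$;
\item all edges $\delta(w)$ incident to a vertex $w \in B$; this yields $\sum_{e \in \delta(w)} y_e \leq 1$.
\end{itemize}
Together with $z \geq 0$, these are exactly the inequalities in the claimed description of $P_A$. The analogous bookkeeping with $A$ and $B$ swapped gives $P_B$.

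I do not anticipate a substantive obstacle here: the corollary is really a bookkeeping consequence of the already-proved Lemma~\ref{lem:perfect-graphs}, Chvátal's theorem, and Observation~\ref{obs:linear-cliques}. The only point that requires a moment of care is making sure that the maximal-clique classification in Observation~\ref{obs:linear-cliques} is applied consistently: when $U = B$, cliques ``corresponding to a node in $\{A,B\} \setminus \{U\}$ and the edges incident to it'' are cliques centered at vertices of $A$ (producing the node-type inequalities), whereas cliques ``corresponding to edges incident to a node in $U$'' are cliques centered at vertices of $B$ with no $x$-term (producing the pure matching inequalities). Once this identification is made, the stated formulas for $P_A$ and $P_B$ read off immediately.
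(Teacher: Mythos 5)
Your proposal is correct and follows essentially the same route as the paper: identify $P_A$ with the stable set polytope of the perfect graph $T(K_{r,s})\setminus B$, apply the clique-inequality description of stable set polytopes of perfect graphs, and read off the maximal cliques via Observation~\ref{obs:linear-cliques}. The paper compresses this into two sentences preceding the corollary, but the substance is identical.
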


Following the discussion from Section~\ref{sec:ef:algo}, we can write $P:=P_T(G) = conv(P_A \cup P_B)$. Balas showed that the convex hull of the union of two polytopes has an extended formulation that can be easily described in terms of the original formulations of the polytopes~\cite{balas1998disjunctive}.
When applied to $P_A$, $P_B$ defined as above, 
Balas' result gives the extended formulation for $P$ from Corollary~\ref{cor:UnionPolytope},  where, for later usage, we also report certain dual multipliers.

\begin{corollary}\label{cor:UnionPolytope} The following is an extended formulation for $P$:
\begin{align*}
    Q:= \bigg\{(x,y,\lambda_1,y_{e}^{1}) \in \mathbb{R}^{|V|+|E|+1+|E|}: & \qquad
    x_v + \sum_{e \in \delta(v)}y_{e}^{1} - \lambda_1\leq 0 , \forall v \in A & \qquad [u^1_v]\\
    \sum_{e \in \delta(w)}y_{e}^{1} - \lambda_1 \leq 0,  \forall w \in B \qquad [u^1_w] & \qquad
    x_w + \sum_{e \in \delta(w)}(y_e - y_{e}^{1})+ \lambda_1\leq 1,   \forall w \in B & \qquad [u^2_w]\\
    \sum_{e \in \delta(v)}(y_e - y_{e}^{1}) + \lambda_1\leq 1, \forall v \in A  \qquad [u^2_v] &  \qquad
    - y_{e}^1 \leq 0,  \forall e \in E \quad [{u^{1}_e}] & -x_v \leq 0, \ \forall v \in V\\
    - y_{e} + y_{e}^{1} \leq 0,  \forall e \in E \qquad [u^{2 }_e]   
    & \qquad -\lambda_1 \leq 0, \quad [{u^{\lambda_1}}]  & \lambda_1 \leq 1  \quad [{u^{\lambda_2}}]
    \quad \bigg\}.
\end{align*}
\end{corollary}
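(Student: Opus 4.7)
The plan is to apply Balas' classical theorem on the convex hull of the union of two polytopes~\cite{balas1998disjunctive}, together with the explicit inequality descriptions of $P_A$ and $P_B$ from Corollary~\ref{cor:PR-PS}. Recall that $P_T(G) = \conv(P_A \cup P_B)$ by the discussion at the start of Section~\ref{sec:ef:ef}. Balas' theorem asserts that $\conv(P_A \cup P_B)$ is the projection, onto the original variables, of the set defined by introducing scaled copies of the variables $(x^1, y^1)$ and $(x^2, y^2)$, together with convex multipliers $\lambda_1, \lambda_2$, subject to the linking constraints $x = x^1 + x^2$, $y = y^1 + y^2$, $\lambda_1 + \lambda_2 = 1$, $\lambda_i \geq 0$, and a rescaling of each polytope's defining inequalities by its multiplier.

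I would then write out the scaled constraints explicitly. Using the description of $P_A$ and scaling by $\lambda_1$, we obtain $x_v^1 + \sum_{e \in \delta(v)} y_e^1 \leq \lambda_1$ for $v \in A$ and $\sum_{e \in \delta(w)} y_e^1 \leq \lambda_1$ for $w \in B$, together with $x_v^1, y_e^1 \geq 0$ and $x_w^1 = 0$ for $w \in B$. Symmetrically, using the description of $P_B$ and scaling by $\lambda_2 = 1 - \lambda_1$, we obtain $x_w^2 + \sum_{e \in \delta(w)} y_e^2 \leq 1 - \lambda_1$ for $w \in B$ and $\sum_{e \in \delta(v)} y_e^2 \leq 1 - \lambda_1$ for $v \in A$, together with $x_w^2, y_e^2 \geq 0$ and $x_v^2 = 0$ for $v \in A$.

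The final step is to eliminate $(x^2, y^2)$ via the linking equations. Setting $x_v = x_v^1$ for $v \in A$, $x_w = x_w^2$ for $w \in B$, and $y_e^2 = y_e - y_e^1$, the $(x, y)$-variables in $Q$ cover both $x^1$ and $x^2$ through a single nonnegative vector $x$, and the rescaled $P_B$ inequalities become the ones in $Q$ carrying the $+\lambda_1$ term (for instance, $x_w + \sum_{e \in \delta(w)}(y_e - y_e^1) + \lambda_1 \leq 1$ for $w \in B$). The nonnegativity $y_e^2 \geq 0$ becomes $-y_e + y_e^1 \leq 0$, while $0 \leq \lambda_1 \leq 1$ is the simplex condition on $(\lambda_1, \lambda_2)$. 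This is routine algebra, so I do not expect a substantive obstacle; the main care to take is tracking which variables are identically zero in each polytope so that the substitutions collapse exactly to the inequalities of $Q$ with the dual multipliers labelled as indicated, which is what will be exploited in Section~\ref{sec:ef:proj}.
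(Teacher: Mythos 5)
Your proposal is correct and follows exactly the route the paper takes: the paper's entire ``proof'' is the remark preceding the corollary, namely that applying Balas' disjunctive formulation to $P=\conv(P_A\cup P_B)$ with $P_A,P_B$ as in Corollary~\ref{cor:PR-PS} yields $Q$. You simply spell out the elimination of $x^2$, $y^2$, and $\lambda_2$ (using that $x^1$ vanishes on $B$ and $x^2$ on $A$, so a single vector $x$ suffices), which the paper leaves implicit.
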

\subsection{Projection}\label{sec:ef:proj}

In order to project the extended formulation defined in the previous section to the space $P$ lives in, we study the associated projection cone. 
\begin{theorem}~\cite[Theorem 2.1]{SurveyExtended}\label{thm:projection-cone}
Let $Q=\{(x,z) \in \mathbb{R}^{n}_{\geq 0} \times \mathbb{R}^{p} : Ax + Bz \leq d\}$ where $A,B$ have $m$ rows, and define its projection cone $C_P:=\{ u \in \mathbb{R}^{m} : uB = 0, u \geq 0 \}$. The projection of $Q$ onto the $x$-space is
$    \proj_{x}(Q)=\{ x \in \R^{n}_{\geq 0} : uAx \leq ud, \hbox{ for all extreme rays $u$ of $C_P$}\}.$
\end{theorem}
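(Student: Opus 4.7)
The plan is to prove the claimed identity for $\proj_x(Q)$ by establishing both inclusions directly, and then to reduce the quantifier from ``all $u \in C_P$'' to ``all extreme rays of $C_P$'' using Minkowski--Weyl. The key tool for the nontrivial direction is Farkas' lemma applied to the system $Bz \leq d - Ax$ in the unconstrained variable $z$.

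For the inclusion $\proj_x(Q) \subseteq \{x \in \R^n_{\geq 0} : uAx \leq ud \text{ for all extreme rays } u \text{ of } C_P\}$, I would start with $x \in \proj_x(Q)$, take a witness $z$ with $Ax + Bz \leq d$ and $x \geq 0$, and multiply on the left by any $u \in C_P$. The sign condition $u \geq 0$ preserves the inequality while $uB = 0$ annihilates the $z$-term, yielding $uAx \leq ud$. In particular this holds for every extreme ray of $C_P$, as required.

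The reverse inclusion is where the real work lies. Given $x \in \R^n_{\geq 0}$ satisfying $uAx \leq ud$ for every extreme ray of $C_P$, I first upgrade the inequality to arbitrary $u \in C_P$: the cone $C_P = \{u \geq 0 : uB = 0\}$ is rational polyhedral and, being contained in the nonnegative orthant, pointed; by Minkowski--Weyl it is the conic hull of its finitely many extreme rays, so any $u \in C_P$ is a nonnegative combination of extreme rays, and the corresponding nonnegative combination of the hypothesized inequalities gives $uAx \leq ud$. To conclude $x \in \proj_x(Q)$ I must exhibit $z$ with $Bz \leq d - Ax$. Invoking Farkas' lemma in the form ``the system $Bz \leq b$ has a solution $z \in \R^p$ if and only if $ub \geq 0$ for every $u \geq 0$ satisfying $uB = 0$'', such $z$ exists precisely when $u(d - Ax) \geq 0$ for every $u \in C_P$, which rearranges to exactly the hypothesis.

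The mild pitfalls I expect are first, picking the correct variant of Farkas' lemma (variables free, constraints one-sided), and second, making sure the reduction to extreme rays is justified by the pointedness of $C_P$ so that only nonnegative combinations are needed. Both steps are standard but should be spelled out explicitly; no deeper obstacle appears to be hiding in the argument.
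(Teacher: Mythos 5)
Your proof is correct. The paper does not prove Theorem~\ref{thm:projection-cone} at all --- it is imported verbatim from the cited survey --- so there is no in-paper argument to compare against; your two-inclusion argument (the easy direction by left-multiplying $Ax+Bz\leq d$ by $u\in C_P$, the substantive direction by the free-variable form of Farkas' lemma applied to $Bz \leq d - Ax$, together with the reduction from all of $C_P$ to its extreme rays via pointedness of $C_P \subseteq \R^m_{\geq 0}$ and Minkowski--Weyl) is the standard and complete proof of this statement. The only degenerate case worth a word is $C_P=\{0\}$, where the extreme-ray condition is vacuous and Farkas correctly certifies that $Bz \leq d-Ax$ is feasible for every $x$, so the claimed identity still holds.
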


We next deduce a description of $P$ in terms of the projection cone from Theorem~\ref{thm:projection-cone}. \begin{lemma}\label{cone:rays}
$$P=\{ (x,y) \in \R^{n+m}_{\geq 0} : \sum\limits_{v \in A}u^{1}_vx_v + \sum\limits_{w \in B}u^{2}_wx_w + \sum\limits_{e=\{v,w\} \in E}\min_{j=1,2}(u^{j}_v + u^{j}_w)y_e \leq \max_{j=1,2}\sum\limits_{w \in V}u^{j}_w, \forall u \in Y \},$$
where $Y$ is the set of vectors $u \in \R_{\geq 0}^{2(|A|+|B|)}$ that satisfy $2(|A|+|B|)-1$ linearly independent constraints from the set 
\begin{align} 
u^{j}_v & = 0 & \hbox{ for $v \in V, {j \in \{1,2\}}$}  \nonumber \\
u_v^1 + u_w^1 & =  u_v^2 + u_w^2 \quad \hbox{ for $v \in A, w \in B$.} \label{eq:ciao} \\ 
\sum_{v \in V}u^1_v & =  \sum_{v \in V} u^2_v. \nonumber
\end{align}
\end{lemma}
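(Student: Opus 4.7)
The plan is to apply Theorem~\ref{thm:projection-cone} to the extended formulation $Q$ from Corollary~\ref{cor:UnionPolytope}, treating $(x,y)$ as the projected variables and $(\lambda_1, y^1)$ as the variables to be eliminated. First, I would read off the projection cone $C_P = \{u \geq 0 : uB = 0\}$ from the columns of $\lambda_1$ and the $y^1_e$ in $Q$. The column of $\lambda_1$ contributes the single equality
\[
\sum_{v \in V} u^1_v - \sum_{v \in V} u^2_v \;=\; u^{\lambda_2} - u^{\lambda_1},
\]
and each column $y^1_e$ with $e = \{v,w\}$ contributes $u^1_v + u^1_w - u^2_v - u^2_w = u^1_e - u^2_e$.

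Next I would compute the projected inequality $uA(x,y) \leq ud$ by bookkeeping the coefficients of $Q$: the $x_v$-coefficient is $u^1_v$ for $v \in A$, the $x_w$-coefficient is $u^2_w$ for $w \in B$, the $y_e$-coefficient for $e=\{v,w\}$ is $u^2_v + u^2_w - u^2_e$, and the right-hand side is $\sum_{v \in V} u^2_v + u^{\lambda_2}$. The key simplification is to eliminate the auxiliary coordinates $u^j_e$ and $u^{\lambda_i}$ at each extreme ray of $C_P$. Since any extreme ray has a one-dimensional support, at least one of $u^1_e, u^2_e$ must vanish for every edge $e$ (otherwise one could decrease both by the same positive amount while preserving the $C_P$-equalities, contradicting extremity), and similarly at least one of $u^{\lambda_1}, u^{\lambda_2}$ vanishes. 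Combining these vanishing relations with the $C_P$-equalities gives $u^2_e = \max\{0,\, u^2_v + u^2_w - u^1_v - u^1_w\}$ and $u^{\lambda_2} = \max\{0,\, \sum u^1 - \sum u^2\}$, which substituted above yield precisely $\min_{j=1,2}(u^j_v + u^j_w)$ for the $y_e$-coefficient and $\max_{j=1,2}\sum_{v \in V} u^j_v$ for the right-hand side, matching the form in the lemma.

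The remaining task is a dimension-counting argument showing that the projections of the extreme rays of $C_P$ to the $2(|A|+|B|)$-dimensional $(u^1,u^2)$-subspace are exactly the vectors in $Y$. The cone $C_P$ lives in $\mathbb{R}^{2|V|+2|E|+2}$ subject to $|E|+1$ equalities, hence has dimension $2|V|+|E|+1$, so an extreme ray saturates $2|V|+|E|$ linearly independent non-negativity constraints. Writing $n_V$ for the number of vanishing $u^j_v$, $n_E$ for the number of edges where both $u^j_e$ vanish, and $n_\lambda' \in \{0,1\}$ for the indicator that both $u^{\lambda_i}$ vanish, the identity $n_V + (|E|+n_E) + (1+n_\lambda') = 2|V|+|E|$ simplifies to $n_V + n_E + n_\lambda' = 2(|A|+|B|)-1$. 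Finally, observe that whenever both $u^j_e$ are zero the $C_P$-equality for $e$ forces $u^1_v + u^1_w = u^2_v + u^2_w$, and whenever both $u^{\lambda_i}$ are zero the $\lambda_1$-equality forces $\sum u^1 = \sum u^2$; hence each extreme ray projects onto a vector of $Y$ and conversely, and the conclusion follows from Theorem~\ref{thm:projection-cone}. The most delicate point will be verifying linear independence of the projected active constraints so that the count $2(|A|+|B|)-1$ genuinely characterizes the extreme rays after projection, rather than merely being a necessary condition.
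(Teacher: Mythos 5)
Your overall route is the same as the paper's: apply Theorem~\ref{thm:projection-cone} to the formulation of Corollary~\ref{cor:UnionPolytope}, read off the cone equalities $u^1_v+u^1_w-u^1_e=u^2_v+u^2_w-u^2_e$ and $\sum_v u^1_v+u^{\lambda_1}=\sum_v u^2_v+u^{\lambda_2}$, argue that on each relevant ray at least one of $u^1_e,u^2_e$ (resp.\ $u^{\lambda_1},u^{\lambda_2}$) vanishes, and substitute to obtain the $\min/\max$ form. All of that is correct, and your extremality argument for the vanishing of one variable in each pair is a valid alternative to the paper's domination argument.

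The one genuine gap is exactly the point you flag at the end, and it is not merely a formality. Your identity $n_V+(|E|+n_E)+(1+n_\lambda')=2|V|+|E|$ implicitly assumes that \emph{all} tight nonnegativity constraints, together with the $|E|+1$ equalities, form a linearly independent system. They need not: if, say, $u^1_v=u^2_v=u^1_w=u^2_w=0$ for the endpoints of an edge $e$ and also $u^1_e=u^2_e=0$, the edge equality is a linear combination of those six nonnegativity constraints, and the naive count overshoots. So in general you only get $n_V+n_E+n_\lambda'\geq 2(|A|+|B|)-1$, whereas the lemma requires exhibiting $2(|A|+|B|)-1$ \emph{linearly independent} constraints from~\eqref{eq:ciao} satisfied by the projected vector. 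The paper closes this by choosing a maximum-cardinality linearly independent tight set $\mathcal{S}$ that contains a prescribed subset $\mathcal{S}'$ of $2|E|+2$ constraints (for each edge, either both $u^1_e=0,u^2_e=0$, or the vanishing one plus the edge equality; analogously for the $\lambda$'s); then $\mathcal{S}\setminus\mathcal{S}'$ has exactly $2(|A|+|B|)-1$ elements, each supported only on node variables after substituting the known zeros, and these are the constraints from~\eqref{eq:ciao}. You would need to supply this (or an equivalent) selection argument. Separately, your converse claim that the projections of extreme rays are \emph{exactly} $Y$ is stronger than needed and harder to prove; the paper only shows that every $u\in Y$ extends to some (not necessarily extreme) point of $C_P$, so that all the inequalities indexed by $Y$ are valid, which together with the containment of the extreme-ray inequalities in this family suffices.
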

\begin{proof} 
By specializing the description of $C_P$ from Theorem~\ref{thm:projection-cone} to the submatrix of the constraint matrix in Corollary~\ref{cor:UnionPolytope} corresponding to variables to be projected out, we obtain: 
\begin{align}
\label{c1new}   C_P= \bigg\{u {= (\{u^j_d\}_{j=1,2; d \in A \cup B \cup E}, u^{\lambda_1},u^{\lambda_2}) \in \mathbb{R}^{2(|V|+|E|+1)}}: & \nonumber \\ u^{1}_{v} + u^{1}_w -u^{1}_e =u^{2}_{v}+u^{2}_{w}-u^{2}_e , &\  \forall e=\{v,w\} \in E \\
\label{c2new}   \sum_{v \in V}u_{v}^{1}  + u^{\lambda_1} =  \sum_{v \in V}u_{v}^{2} + u^{\lambda_2}, & \\
\label{c3new}    u \geq 0 \bigg\}.
\end{align}

Using Theorem~\ref{thm:projection-cone}, there is a complete description of $P$ containing only inequalities of the following form:
\begin{align}
\label{ineq} \sum\limits_{v \in A}u^{1}_vx_v + \sum\limits_{w \in B}u^{2}_wx_w + \sum\limits_{e=\{v,w\} \in E}(u^{2}_v + u^{2}_w - u^{2}_e)y_e \leq \sum\limits_{w \in V}u^{2}_w + u^{\lambda_2}
\end{align}
where $u$ is an extreme ray of $C_P$. 

We next claim that, in~\eqref{ineq}, we can assume w.l.o.g.~that, for $e \in E$, at least one of $u^{1}_e, u^{2}_e$ is equal to $0$. Indeed, since those variables are nonnegative, if they are both strictly positive we can decrease them by $\min\{u^{1}_e, u^{2}_e\}>0$ and obtain a stronger inequality~\eqref{ineq}. Similarly, at least one of $u^{\lambda_1}=0$, $u^{\lambda_2}=0$ holds. Hence, using~\eqref{c1new} and~\eqref{c2new}, we can rewrite~\eqref{ineq} as 
\begin{align*}
\label{ineq-rewrite} \sum\limits_{v \in A}u^{1}_vx_v + \sum\limits_{w \in B}u^{2}_wx_w + \sum\limits_{e=\{v,w\} \in E}\min_{j=1,2}(u^{j}_v + u^{j}_w)y_e \leq \max_{j=1,2}\sum\limits_{w \in V}u^{j}_w.
\end{align*}

Let $\overline u$ be an extreme ray of $C_P$. \revision{Recall that $\overline{u} \in \mathbb{R}^{2(|V|+|E|+1)}$}. We first claim that the vector obtained from $\overline u$ by projecting out $\{u_e^{1},u_e^{2}\}_{e \in E}, u^{\lambda_1}, u^{\lambda_2}$ is a nonnegative vector that satisfies $2(|A|+|B|)-1$ linearly independent constraints from~\eqref{eq:ciao}. By construction, $\overline u$ satisfies at equality a set ${\cal S}$ of $2(|A|+|B|)+2|E|+1$ linearly independent constraints from~\eqref{c1new}--\eqref{c3new}. By basic linear algebra, any set of linearly independent constraints from~\eqref{c1new}--\eqref{c3new} tight at $\overline u$ can be enlarged to a linearly independent set of inequalities tight at $\overline u$ of maximum cardinality. Hence we can assume w.l.o.g.~that ${\cal S}$ contains the following set ${\cal S}'$ of linearly independent constraints. For $e \in E$, if $\overline u_{e}^{2}=\overline u_{e}^{1}=0$, then constraints $u_{e}^{2}=0$, $u_{e}^{1}=0$ belong to ${\cal S}'$. Else, from what argued above, we have $\overline u_{e}^{j}>0$ and $\overline u_{e}^{3-j}=0$ for some $j \in \{1,2\}$, and we let ${\cal S}'$ contain \revision{$u_{e}^{3-j}=0$} and $u^{1}_{v} + u^{1}_w -u^{1}_e =u^{2}_{v}+u^{2}_{w}-u^{2}_e$. Similarly, either $u^{\lambda_1}=0$ and $u^{\lambda_2}=0$ are both valid and we let them belong to ${\cal S}'$, or we let the one of them that is valid and $\sum_{v \in V}u_{v}^{1}  + u^{\lambda_1} =  \sum_{v \in V}u_{v}^{2} + u^{\lambda_2}$ belong to ${\cal S}'$. It is easy to see that constraints in ${\cal S}'$ are linearly independent, hence ${\cal S}\setminus {\cal S}'$ is a set of $2(|A|+|B|)-1$ linearly independent constraints. 

Note that an inequality~\eqref{c1new} belongs to ${\cal S}\setminus {\cal S}'$ only if both the variables $u_e^{1}$ and $u_e^{2}$ appearing in its support are set to $0$. In particular, only if $\overline u$ satisfies $\overline u^{1}_{v} + \overline u^{1}_w =\overline u^{2}_{v}+\overline u^{2}_{w}$. Similarly, a constraint~\eqref{c2new} belongs to ${\cal S}\setminus {\cal S}'$ only if $\sum_{v \in V}\overline u_{v}^{1} =  \sum_{v \in V}\overline u_{v}^{2}$. Hence, $\overline u$ satisfies at equality $2(|A|+|B|)-1$ linearly independent constraints from~\eqref{eq:ciao}, and the claim follows.

Conversely, any nonnegative vector in the components $\{u_v^1,u_v^2\}_{v \in V}$ that satisfies any set of constraints from~\eqref{eq:ciao} can be extended to a vector of $C_P$ by appropriately adding components $u^{1}_e, u^{2}_e, u^{\lambda_1}, u^{\lambda_2}$, concluding the proof. 
\qed
\end{proof}

%
%
%
%

We call \emph{fundamental} a vector $u \in Y$ whose associated inequality (as in Lemma~\ref{cone:rays}) defines a facet of $P$ different from~\eqref{m6:c1}--\eqref{m6:c3},~\eqref{complete},~\eqref{biclique}. 
A set ${\cal S}$ of $2(|A|+|B|)-1$ linearly independent constraints from~\eqref{eq:ciao} is said to \emph{support} $u$ if $u$ satisfies all constraints in ${\cal S}$. For a set ${\cal S}$ supporting a vector $u \in Y$, we let $G({\cal S})$ be the graph that contains all vertices of $K_{r,s}$, colors a vertex $v$ \emph{blue} (resp.~\emph{red}) if $u_v^1=0$ (resp., $u_v^2=0$) belongs to ${\cal S}$, and contains edge $vw$ if $u_v^1 + u_w^1 =  u_v^2 + u_w^2$ belongs to ${\cal S}$. Note that a node can be colored both blue and red in $G({\cal S})$ - we call such nodes \emph{bicolored}. A node that is colored with exactly one of red and blue is \emph{monochromatic}. A connected component of $G({\cal S})$ is \emph{non-trivial} if it contains at least two nodes. 

Let ${\cal S}$ be the set of constraints supporting a fundamental vector $u$. 
${\cal S}$ is called \emph{canonical} if, among all sets supporting $u$, $G({\cal S})$ maximizes the number of colored nodes (with each bicolored node counting twice) and, subject to the previous condition, maximizes the number of edges. 

\begin{lemma}\label{lem:cycle} Let $u$ be fundamental, $\cal S$ a canonical set supporting it, and ${\cal I}$ the isolated nodes of $G({\cal S})$.  
\begin{enumerate} 

\item For each edge $\{v,w\}$ of $G({\cal S})$, $v$ and $w$ are monochromatic and colored with opposite colors.

\item If $\sum_{v \in V}u^1_v =  \sum_{v \in V} u^2_v$ does not belong to ${\cal S}$, there is exactly one non-trivial connected component $C$ in $G({\cal S})$, all nodes of $C$ are monochromatic, and all nodes from ${\cal I}$ are bicolored.
\item If $\sum_{v \in V}u^1_v =  \sum_{v \in V} u^2_v$ belongs to ${\cal S}$, there is exactly one non-trivial connected component $C$ in $G({\cal S})$, all nodes of $C$ are monochromatic, and all nodes from ${\cal I}$ are bicolored, except one that is monochromatic. 
\end{enumerate}
\end{lemma}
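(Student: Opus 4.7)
The plan is to derive the structure of $G(\mathcal{S})$ by combining a counting argument on $|\mathcal{S}|=2|V|-1$ with the canonicity hypothesis and the fundamentality of $u$. Let $b, r, bi, n$ denote the numbers of blue-only, red-only, bicolored, and uncolored nodes in $G(\mathcal{S})$, let $e$ be the number of edges of $G(\mathcal{S})$, and let $g\in\{0,1\}$ indicate whether the global constraint $\sum_v u_v^1=\sum_v u_v^2$ belongs to $\mathcal{S}$. Counting the entries of $\mathcal{S}$ gives
\begin{equation*}
b+r+2bi+e+g=2|V|-1.
\end{equation*}
A preliminary observation is that every non-trivial component of $G(\mathcal{S})$ must be a tree. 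Indeed, an even cycle of $K_{r,s}$ on monochromatic nodes, after substituting the coloring constraints into the edge constraints, yields the trivial identity as an alternating sum, a linear dependence incompatible with $\mathcal{S}$ being linearly independent.

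For statement 1, I would argue by contradiction in four subcases, producing either a linear dependence in $\mathcal{S}$ or a swap strictly improving canonicity. If $\{v,w\}\in G(\mathcal{S})$ has both endpoints bicolored, the edge functional is the combination $(u_v^1)+(u_w^1)-(u_v^2)-(u_w^2)$ of the four colorings in $\mathcal{S}$, a linear dependence. If $v$ is bicolored and $w$ is monochromatic, or if $v$ and $w$ are monochromatic of the same color, the edge constraint together with the colorings in $\mathcal{S}$ forces $u$ to satisfy an additional coloring (at $w$, or at both endpoints) not yet in $\mathcal{S}$. Since that coloring lies in $\mathrm{span}(\mathcal{S})$ but cannot be written using only the coloring constraints of $\mathcal{S}$ (otherwise the node would already be bicolored), it must use with nonzero coefficient some non-coloring $C_j\in\mathcal{S}$, and swapping $C_j$ for the new coloring produces a supporting $\mathcal{S}'$ with strictly more colored nodes. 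The only subtle case is $v$ bicolored and $w$ uncolored: the edge forces $u_w^1=u_w^2$. If this common value is zero we bicolor $w$ as above; otherwise the direction with $1$'s at $u_w^1, u_w^2$ and $0$'s elsewhere satisfies every constraint in $\mathcal{S}$, hence lies in the $1$-dimensional null space, so $u$ is proportional to it, and a direct calculation shows that the inequality induced by $u$ is the basic node inequality at $w$, contradicting fundamentality.

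For statements 2 and 3, I would use the tree property to compute the nullity of the local part of $\mathcal{S}$ (i.e., $\mathcal{S}$ minus the global constraint, if any): each tree component contributes $1$, each isolated monochromatic node contributes $1$, each uncolored node contributes $2$, and isolated bicolored nodes contribute $0$. Combined with the rank hypothesis on $\mathcal{S}$, this yields $c+m_I+2n=1$ in Case 2 ($g=0$) and $c+m_I+2n=2$ in Case 3 ($g=1$), where $c$, $m_I$, $n$ are the numbers of non-trivial components, isolated monochromatic nodes, and uncolored nodes, respectively. Case 2 leaves $(c,m_I,n)\in\{(1,0,0),(0,1,0)\}$; the second option is excluded because $u$ is supported at a single node and the induced inequality reduces to a basic facet, violating fundamentality. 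Case 3 leaves $(2,0,0)$, $(1,1,0)$, $(0,2,0)$, $(0,0,1)$; the options other than $(1,1,0)$ are eliminated either because the global constraint would be implied by the local part (contradicting its linear independence in $\mathcal{S}$)---possibly after a canonicity-improving swap that merges two components via a tight inter-component edge in the $(2,0,0)$ case---or because the induced inequality is a basic facet. Statement 1 then gives the coloring distribution as claimed in each case.

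The main obstacle is the uncolored subcase of statement 1: identifying that $u$ must be proportional to the elementary perturbation at $w$, and verifying that the induced inequality is precisely the node inequality~\eqref{m6:c1}. The rest is a careful case analysis guided by the nullity count, the tree structure forced by linear independence, and the elimination of non-canonical or non-fundamental configurations through explicit swaps.
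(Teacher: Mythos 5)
Your overall architecture (acyclicity from linear independence, canonicity swaps to force colorings, and a rank/nullity count pinning down the number of non-trivial components, isolated monochromatic nodes, and uncolored nodes) matches the paper's, and the counting identity $c+m_I+2n=1+g$ is a clean reorganization of what the paper argues componentwise. Two issues remain, one minor and one serious. The minor one: your case analysis for statement 1 omits the configurations ``one endpoint monochromatic, the other uncolored'' and ``both endpoints uncolored.'' Your own key observation rescues these: the elementary direction supported on the two coordinates of \emph{any} uncolored node $w$ annihilates every constraint of ${\cal S}$ (colorings of other nodes, every edge constraint through or away from $w$, and the global constraint), so it spans the one-dimensional null space of ${\cal S}$ and $u$ must be a positive multiple of it, giving the node inequality \eqref{m6:c1} at $w$; this does not require the neighbor to be bicolored, nor the edge to force $u_w^1=u_w^2$. (The paper instead perturbs $u$ by subtracting $\min\{u^1_w,u^2_w\}$ from both coordinates and writes the $u$-inequality as a conic combination of the perturbed inequality and \eqref{m6:c1}.)

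The serious gap is your elimination of the two-component configuration $(c,m_I,n)=(2,0,0)$ in statement 3. That configuration is \emph{not} excluded by linear algebra or canonicity. When the two tree components carry distinct values $\alpha\neq\beta$ (all nonzero entries within a component are equal, by propagating the edge constraints along the tree), the global constraint $\sum_v u^1_v=\sum_v u^2_v$ is exactly the equation \eqref{eq:alpha-beta} tying $\alpha$ to $\beta$; it is genuinely independent of the local constraints, whose null space is two-dimensional, and no inter-component edge constraint is tight (a cross edge between a red node of value $\alpha$ and a blue node of value $\beta$ is tight only if $\alpha=\beta$), so no canonicity-improving swap exists. Such $u$ are bona fide extreme rays of the projection cone. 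The reason they do not contradict the lemma is that $u$ fails to be \emph{fundamental}: the associated inequality \eqref{RaRb} is precisely the balanced biclique inequality \eqref{complete} when $\alpha=\beta$ and the non-balanced lifted biclique inequality \eqref{biclique} with $A_1=A_\beta$, $B_1=B_\beta$ when $\beta>\alpha$. Identifying the resulting inequality with a member of the known facet families is the essential content of the paper's proof of Part 3, and it cannot be replaced by the dependence/swap arguments you propose.
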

\begin{proof}
\begin{claim}\label{cl:1}
$G({\cal S})$ does not have cycles.
\end{claim} \emph{Proof of claim:} Since $K_{r,s}$ is bipartite, any cycle $C$ in $G({\cal S})$ must be even. Alternatively summing and subtracting the equalities~\eqref{eq:ciao} corresponding to edges of $C$, we obtain $0=0$, contradicting linear independence of constraints from ${\cal S}$. \hfill $\diamond$

\begin{claim}\label{cl:2}
$G({\cal S})$ contains at least one edge.
\end{claim} \emph{Proof of claim:} Suppose the thesis does not hold. Then the only constraints in ${\cal S}$ are of the form $u_{v}=0$ and, possibly, $\sum_{v \in V}u^1_v  =  \sum_{v \in V} u^2_v$. Suppose first that $\sum_{v \in V}u^1_v  =  \sum_{v \in V} u^2_v$ is not contained in ${\cal S}$. Then there exists a node $v \in V$ that is monochromatic, while all the other nodes of $G({\cal S})$ are bicolored. Assume w.l.o.g.~that $v$ is colored red. If $u_v^1=0$, then $u$ is the zero vector, which is clearly not fundamental. Hence, $u^1_v>0$. Then the inequality corresponding to $u$ is dominated by the edge inequality corresponding to $vw$ for some $w$ in the neighborhood of $v$, a contradiction.

Hence, assume that $\sum_{v \in V}u^1_v  =  \sum_{v \in V} u^2_v$ belongs to ${\cal S}$. 
Since $u$ is not the zero vector and ${\cal S}$ is canonical, there must be nodes $v,v'$ (possibly $v=v'$) with $u^1_{v}=u^2_{v'}>0$, while nodes from $V\setminus \{v,v'\}$ are bicolored. If $v,v'$ are on opposite sides of the vertex bipartition, then we can replace $\sum_{v \in V}u^1_v  =  \sum_{v \in V} u^2_v$ with the constraint corresponding to edge $vv'$, showing that ${\cal S}$ is not canonical. Hence, assume that $v,v'$ are on the same side of the bipartition, and assume w.l.o.g.~that $v,v' \in A$. Then the inequality corresponding to $u$ is again dominated by the edge inequality corresponding to $vw$ for some $w \in B$, a contradiction.  \hfill $\diamond$

\begin{claim}\label{cl:3}
Let $u^1_v=0$ (resp.~$u^2_v=0$) for some $v \in V$. Then $v$ is colored blue (resp.~red).
\end{claim}
\emph{Proof of claim:} Suppose w.l.o.g.~that $u^1_v=0$ but $v$ is not colored blue. By Claim~\ref{cl:2}, $G({\cal S})$ has at least one edge. If $u^1_v=0$ is linearly independent from constraints in ${\cal S}$, then we can replace some edge constraint in ${\cal S}$ with $u^1_v=0$, contradicting the canonicity of ${\cal S}$. Hence $u^1_v=0$ can be obtained as a  linear combination of constraints in ${\cal S}$. Note that a set of equations generating $u^1_v=0$ must contain either an edge constraint incident to $v$ or $\sum_{v \in V} u^1_v = \sum_{v \in V} u^2_v$ (or both), since those are the only other constraints whose support contains $u^1_v$. Hence, {we can remove from ${\cal S}$ all constraints that contain $u^1_v$, and add $u^1_v=0$. This will preserve linear independence. Then we can complete the system of tight inequalities to reach $2(|A|+|B|)-1$ linearly independent constraints and have strictly more colored nodes than ${\cal S}$, contradicting the canonicity of ${\cal S}$.}  \hfill $\diamond$

\smallskip

1. We first show that $v,w,$ cannot be colored with the same color. Suppose w.l.o.g.~both $v,w$ are  colored blue. In particular, $u^1_v=u^1_w=0$. Since $vw$ is an edge of $G({\cal S})$, we have $u^2_v + u^2_w = 0$, which by nonnegativity of $u$ implies $u^2_v=u^2_w=0$. By Claim~\ref{cl:3}, $u^1_v=0$, $u^2_v=0$, $u^1_w=0$, $u^2_w=0$ belong to ${\cal S}$. By hypothesis, $u^1_w + u^1_w = u^2_w + u^2_w$ also belongs to ${\cal S}$, contradicting the fact that ${\cal S}$ is linearly independent. 


We conclude the proof by showing that both $v,w$ are colored. Suppose by contradiction that $v$ is not colored. { Then $u^1_v, u^2_v > 0$ by Claim~\ref{cl:3}. Reduce $u^1_v, u^2_v$ by the minimum of the two, as to obtain vector $u'$. $u' \in Y$ since all constraints from~\eqref{eq:ciao} other than nonnegativity that contains $u_v^1$, also contain $u_v^2$ with the opposite coefficient. The inequality associated to $u$ is dominated by a conic combination of the inequality associated to $u'$ and the node inequality~\eqref{m6:c1} associated to $v$, a contradiction.}

\smallskip 


2. We know by Claim~\ref{cl:2} and Part 1 that $G$ has a non-trivial connected component ${C}$, and that all its nodes are monochromatic. Let $k$ be the number of nodes of $C$. By Claim~\ref{cl:1}, $C$ has $k-1$ edges, 
and the total number of colors used in nodes from $C$ is exactly $k$.  
Since $\sum_{v \in V}u^1_v =  \sum_{v \in V} u^2_v$ does not belong to ${\cal S}$,  there are 
{exactly} $2k-1$ constraints from ${\cal S}$ supported over some of the $2k$ variables indexed over nodes of $C$. No other non-trivial connected component of $G({\cal S})$ can then exist: since ${\cal S}$ contains $2(|V|)-1-(2k-1)=2(|V|-k)$ constraints in addition to those supported by variables indexed by nodes of $C$, every node not in $C$ must be isolated and bicolored. 


3. By applying an argument similar to Part 2~above, we deduce that, for each non-trivial connected component $C$ of $G({\cal S})$ with $k$ nodes, there are 
{exactly} $2k-1$ constraints from ${\cal S}$ whose support is contained on the set of variables corresponding to nodes from $C$. There is one more constraint in ${\cal S}$ that involves variables associated to nodes in $C$, and this is $\sum_{v \in V}u^1_v =  \sum_{v \in V} u^2_v$. Hence, we distinguish the following cases:
\begin{enumerate}[a)]
\item There are exactly two non-trivial connected components in $G({\cal S})$, all its nodes are monochromatic, and all nodes from ${\cal I}$ are bicolored.
\item There is exactly one non-trivial connected component in $G({\cal S})$, all its nodes are monochromatic, and all nodes from ${\cal I}$ are bicolored, except one that is monochromatic.
\end{enumerate}

We conclude the proof by showing that a) cannot happen.  Indeed, let $C_\alpha$ and $C_\beta$ be the two non-trivial connected components. Since all nodes of each non-trivial connected component are monochromatic (with nodes on the opposite sides of the same non-trivial connected component having different colors) by Part 1, all non-zero variables associated to vertices of $C_\alpha$ (resp.~$C_\beta$) have the same value $\alpha$ (resp.~$\beta$). Let $A_\alpha$, $B_\alpha$ (resp.~$A_\beta$, $B_\beta$) be the two sides of the bipartition of component $C_\alpha$ (resp.~$C_\beta$). We assume without loss of generality that $\alpha=1\leq \beta$ and $|A_\beta|\revision{\geq} |B_\beta|$. We also assume that nodes of $A_\alpha \cup A_\beta$ are red and nodes $B_\alpha \cup B_\beta$ are blue, the other cases following analogously. Note that $\sum_{v \in V}u^1_v =  \sum_{v\in V} u^2_v$ then implies 
\begin{equation}\label{eq:alpha-beta}
 |A_\alpha|+ \beta |A_\beta| = |B_\alpha| + \beta |B_\beta|,    
\end{equation}
which, together with $|A_\beta|\revision{\geq} |B_\beta|$, implies $|A_\alpha|\revision{\leq} |B_\alpha|$.

The inequality associated to $u$ is as follows 
\begin{align}\label{RaRb}
\sum\limits_{v \in A_{\alpha} \cup B_{\alpha}}x_v &  + \sum\limits_{v \in A_{\beta} \cup B_{\beta}}\beta x_v + \sum\limits_{e \in ((A_{\alpha} \cup A_{\beta}) \times (B_{\alpha} \cup B_{\beta})) {\setminus (A_\beta \times B_\beta)}} y_{e} \revision{+ \sum\limits_{e \in A_\beta \times B_\beta} \beta y_{e}}\leq |A_{\alpha}|+\beta|A_{\beta}|.
\end{align}
If $\beta=1$, then $|A_\alpha|+ |A_\beta| = |B_\alpha| + |B_\beta|$ and~\eqref{RaRb} is exactly the balanced biclique inequality associated to the subgraph of $K_{r,s}$ with bipartition $(A_\alpha  \cup B_\alpha, A_\beta  \cup B_\beta)$, a contradiction. So assume $\beta >1$. \revision{Then~\eqref{RaRb} is the unbalanced biclique inequality with bipartition $(A_\alpha  \cup A_\beta, B_\alpha  \cup B_\beta)$ and $A_1=A_\beta$, $B_1=B_\beta$.}\qed \end{proof}

We can now complete the proof of Theorem~\ref{thm:complete-description}. Since inequalities~\eqref{m6:c1} -- \eqref{m6:c3} and~\eqref{complete}, ~\eqref{biclique} are facet-defining for $P_T(G)$, it suffices to show that no other inequality (other than positive scalings of those) defines a facet of $P_T(G)$. It suffices to consider inequalities associated to $u$ as in Theorem~\ref{thm:projection-cone}, for $u$ fundamental, the set ${\cal S}$ supporting $u$ being canonical, and the smallest non-zero entry of $u$ being $1$. Thanks to Lemma~\ref{lem:cycle}, we have a good understanding of how such $u$, ${\cal S}$ may look like.

Assume first $\sum_{v \in V}u^1_v = \sum_{v \in V}u^2_v$ does not belong to ${\cal S}$. Then $G({\cal S})$ has exactly one non-trivial connected component $C$ (whose nodes are all monochromatic) and all other nodes are bicolored. Moreover, nodes of $C$ from opposite sides of the bipartition are colored with different colors. Similarly to the proof of Lemma~\ref{lem:cycle}, Part 3, we deduce that all non-zero components of $u$ have the same value. Hence, for some $j \in \{1,2\}$, we have $u^j_v=u^{3-j}_w=1$ and $u^{3-j}_v=u^{j}_w=0$ for all $v \in A':=A \cap V({C}), w \in B':=B \cap V({C})$. If $j=2$, the inequality associated to $u$ is 
$$
\sum_{e = \{v,w\}, v \in A', w \in B'} y_e \leq \max\{|A' |,|B' |\},
$$ while if $j=1$ the inequality we obtain is 
$$\sum_{v \in A' } x_v + \sum_{w \in B' } x_w + \sum_{e = \{v,w\}, v \in A', w \in B' } y_e \leq \max\{|A'|,|B'|\}.
$$
Both those inequalities coincide or are dominated by the balanced and non-balanced biclique inequalities associated to the pair $A',B'$. 

Now, assume that $\sum_{v \in V}u^1_v = \sum_{v \in V}u^2_v$ belongs to ${\cal S}$. Then $G({\cal S})$ has exactly one singleton monochromatic node $\overline v$, plus exactly one non-trivial connected component $C$ induced by all other monochromatic nodes.  Let $A':=A \cap V({C})$ and $B':=B \cap V({C})$. Again, nodes from $A'$ and $B'$ are colored with opposite colors and we deduce that all non-zero components of $u$ indexed by nodes of $C$ have w.l.o.g.~value $1$. W.l.o.g.~assume that $|A'| < |B'|$. Then, $\overline{v}$ is colored of the same color of nodes of $A'$, for otherwise, by using $\sum_{v \in V}u^1_v = \sum_{v \in V}u^2_v$ we obtain $u=0$, a contradiction. 
Assume also that nodes in $A'$ (resp., $B'$) are colored red (resp., blue), and $\overline{v} \in A$, the other cases following analogously. Then, by a simple computation we derive that $u^1_{\overline{v}}=|B'|-|A'|$. 
The inequality associated to $u$ is therefore
\[
\sum\limits_{v \in A' } x_v + (|B'|-|A'|)x_{\overline{v}} + \sum\limits_{w \in B' } x_w + \sum\limits_{e = \{v,w\}, v \in A'\cup \{\bar v\}, w \in B' } y_e \leq |B'|.
\]
which is exactly the non-balanced lifted biclique inequality with $A=A'\cup \{\bar v\}$, $B=B'$, $A_1=\{v\}$, $B_1=\emptyset$. This concludes the proof.

\smallskip 

\noindent  {\bf Acknowledgments.} This research was partially supported by: the Italian Ministry of Education, University and Research (MIUR), Dipartimenti di Eccellenza Program (2018--2022) - Dept.~of Mathematics ``F. Casorati'', University of Pavia; Dept.~of Mathematics and its Applications, University of Milano-Bicocca; National Institute of High Mathematics (INdAM) ``F. Severi''; and by the  ONR award N00014-20-1-2091. Luca Ferrarini is deeply indebted to Stefano Gualandi for discussions and insightful observations on the topic. {We thank an anonymous reviewer for pointing out a mistake in a mathematical argument from a previous version of the manuscript.}

\bibliographystyle{apalike}
\bibliography{biblio}

\end{document}